\DeclareMathOperator{\disc}{disc}
\DeclareMathOperator{\Prob}{Prob}
\newtheorem{Lemma}{Lemma}
\newtheorem{Proposition}{Proposition}
\author{Hao Chen\inst{1} \and Kristin Lauter\inst{2} \and Katherine E. Stange \inst{3}}
\institute{University of Washington, Seattle, USA \\
 \email{chenh123uw.edu} \\
 \and
Microsoft Research, Redmond, USA \\
 \email{klauter@microsoft.com} \\
\and
University of Colorado, Boulder, USA \\
\email{kstange@math.colorado.edu}}
\begin{document}

\title{Security considerations for Galois non-dual RLWE families}

\maketitle

\begin{abstract}
We explore further the hardness of the non-dual discrete variant of the Ring-LWE problem for various number rings, give improved attacks for certain rings satisfying some additional assumptions, construct a new family of vulnerable Galois number fields, and apply some number theoretic results on Gauss sums to deduce the likely failure of these attacks for 2-power cyclotomic rings and unramified moduli.
\end{abstract}

\section{Introduction}

Lattice-based cryptography was introduced in the mid 1990s in two different forms, independently by Ajtai-Dwork~\cite{ajtai1997public} and Hoffstein-Pipher-Silverman~\cite{hoffstein2008introduction}.  Thanks to the work of Stehl\'e-Steinfeld~\cite{stehle2011making}, we now understand the NTRU cryptosystem introduced by Hoffstein-Pipher-Silverman to be a variant of a cryptosystem which has security reductions to the Ring Learning With Errors (RLWE) problem. The RLWE problem was introduced in~\cite{lyubashevsky2013ideal} as a version of the LWE problem~\cite{regev2009lattices}: both problems have reductions to hard lattice problems and thus are interesting for practical applications in cryptography.  RLWE depends on a number ring $R$, a modulus $q$, and an error distribution.  As such, it has added structure (the ring), which allows for greater efficiency, but also in some cases additional attacks.

The hardness of RLWE is crucial to cryptography, in particular as the basis of numerous homomorphic encryption schemes~\cite{bos2013improved,brakerski2012fully,brakerski2012leveled,brakerski2011fully,brakerski2014efficient,lopez2012fly,stehle2011making}.  One main theoretical result in this direction is the security reduction theorem in \cite{lyubashevsky2013ideal}, which reduces certain GapSVP problems in ideal lattices over $R$ to RLWE, when the RLWE error distribution is sufficiently large and of a prescribed form.  Although so far in practical cryptographic applications only cyclotomic rings are used, it is important to study the hardness of RLWE for general number rings, moduli and error distributions, so as to understand the boundaries of security in the parameter space.  Recently, new attacks on the so-called \emph{non-dual discrete} variant of the RLWE problem for certain number rings, error distributions, and special moduli were introduced~\cite{cryptoeprint:2016:240,castryck2016provably,cryptoeprint:2015:971,eisentrager2014weak,elos2015weak}.  The RLWE problem reduces to its discrete variant; and the non-dual RLWE problem is equivalent to the dual problem up to a change in the error distribution, so that non-dual RLWE may be viewed simply as a certain choice of error distribution in the parameter space of RLWE.  The term \emph{RLWE} is sometimes reserved for spherical Gaussian distributions.

This paper is an extension of \cite{cryptoeprint:2015:971}, and here we explore further the hardness of the non-dual discrete variant of the  RLWE problem for various number rings.  We:
\begin{enumerate}
        \item construct a new family of vulnerable Galois number fields,
        \item improve the runtime of the attacks for certain rings satisfying some additional assumptions, and 
        \item apply some number theoretic results on Gauss sums to deduce the likely failure of these attacks for 2-power cyclotomic rings.
\end{enumerate}

In cryptographic applications, it is most efficient to sample the error distribution coordinate-wise according to a polynomial basis for the ring.  
For $2$-power cyclotomic rings, which are monogenic with a well-behaved power basis, it is justified to sample the RLWE error distribution directly in the polynomial basis for the ring, according to results in~\cite{brakerski2011fully,eisentrager2014weak,lyubashevsky2013ideal}, where this error distribution choice is called Polynomial Learning With Errors (PLWE).  Precisely, the PLWE (polynomial error), RLWE (meaning a spherical Gaussian), and non-dual RLWE problems are equivalent up to a scaling and rotation of the error distribution for $2$-power cyclotomic fields.  However, in general number rings the error distribution may be distorted by a general linear transformation when moving from one problem to another \cite{elos2015weak}.   For certain choices of ring and modulus, efficient attacks on PLWE were presented in~\cite{eisentrager2014weak}.  In~\cite{elos2015weak}, these attacks were extended to apply to the decision version of the non-dual RLWE problem in certain rings, and in~\cite{castryck2016provably,cryptoeprint:2015:971}, attacks on the search version of the RLWE problem for certain choices of ring and modulus were presented.

\subsection{Summary of contributions}

\begin{itemize}
\item In Section 3, we present an improvement to the attack in  \cite[Section 4]{cryptoeprint:2015:971} and use it to dramatically cut down the runtime of the attacks on the weak instances found in \cite[Section 5]{cryptoeprint:2015:971}.

\item In Section 4, we present a new infinite family of Galois number fields vulnerable to our attack in \cite[Section 4]{cryptoeprint:2015:971}, where
the relative standard deviation parameter is allowed to grow to infinity, and we give a table of examples.

\item In Section 5, we analyze the security of 2-power cyclotomic fields with unramified moduli under our attack. We prove Theorem~\ref{cyclo secure}, which gives an upper bound on the statistical distance between an approximated non-dual RLWE error distribution, reduced modulo a prime ideal $\fq$, and the uniform distribution on $R/\fq$. We conclude that the 2-power cyclotomic rings are safe against our attack when the modulus $q$ is  unramified  with small residue degree (1 or 2), and is not too large ($q < m^2$).
\end{itemize}

{\bf Acknowledgements} We thank Chris Peikert, Igor Shparlinski, L\'eo Ducas and Ronald Cramer for helpful discussions.

\section{Background}

\subsection{Discrete Gaussian on lattices}
Recall that a {\it lattice} in $\bR^n$ is a discrete subgroup of $\bR^n$ of rank $n$.  For $r > 0$,  let $\rho_r (x) = e^{-||x||^2/r^2}$.
\begin{definition}
For a lattice $\Lambda \subset \bR^n$ and $r > 0$, the {\it discrete Gaussian distribution} on $\Lambda$ with width $r$ is:
\[
    D_{\Lambda, r}(x) = \frac{\rho_r(x)}{\sum_{y \in\Lambda} \rho_r(y)}, \, \forall x \in \Lambda.
\]
\end{definition}

\subsection{Non-dual RLWE}

A non-dual discrete RLWE instance is specified by a ring $R$, a positive integer $q$ and an error distribution $\chi$ over $R$.  Here $R$ is normally taken to be the ring of integers of some number field $K$ of degree $n$. The integer $q$, called the {\it modulus}, is often taken to be a prime number. We then fix an element $s \in R/qR$ called the {\it secret}.

Let $\iota: K \to \bR^n$ be the \emph{adjusted canonical embedding} defined as follows.  Suppose $\sigma_1, \ldots, \sigma_{r_1}, \sigma_{r_1+1}, \ldots, \sigma_n$ are the distinct embeddings of $K$, such that $\sigma_1, \cdots, \sigma_{r_1}$ are the real embeddings and $\sigma_{r_1+r_2+j} = \overline{\sigma_{r_1 + j}}$ for $1 \leq j \leq r_2$. We define $\iota: K \to \bR^n$ by
\begin{eqnarray*}
    x \mapsto (\sigma_1(x), \cdots, \sigma_{r_1}(x), \sqrt{2}\Re(\sigma_{r_1+1}(x)), \sqrt{2}\Im(\sigma_{r_1+1}(x)), \cdots,  \\ \sqrt{2} \Re(\sigma_{r_1+r_2}(x)), \sqrt{2} \Im(\sigma_{r_1+r_2}(x))).
\end{eqnarray*}

Then the {\it non-dual discrete RLWE error distribution} is the discrete Gaussian distribution $D_{\iota(R),r}$.

\begin{definition}
Fix $R,q, r$ as above. Let $R_q$ denote the quotient ring $R/qR$. Then a non-dual RLWE sample is a pair $$(a, b = as+e) \in R_q \times R_q, $$ where the first coordinate $a$ is chosen uniformly at random in $R_q$, and $e$ is a sampled from the discrete Gaussian $D_{\iota(R), r}$, considered modulo $q$. 
\end{definition}

\begin{definition}[Non-dual Search RLWE]
        Given arbitrarily many non-dual RLWE samples, determine the secret $s$.
\end{definition}

\begin{definition}[Non-dual Decision RLWE]
        Given arbitrarily many samples in $R_q \times R_q$, which are either non-dual RLWE samples for a fixed secret $s$, or uniformly random samples, determine which.
\end{definition}

\subsection{Comparing RLWE with non-dual RLWE}

In the original work \cite{lyubashevsky2013ideal}, the RLWE problem is introduced using the dual ring $R^{\vee}$.  Specifically, for the discrete variant, $s \in R_q^\vee := R^\vee /qR^\vee$, and an RLWE sample is taken to be of the form
 $$(a, b = as+e) \in R_q \times R_q^\vee,$$ 
 where $e$ is sampled from $D_{\iota(R^\vee),r}$, then considered modulo $q$.

 If the dual ring $R^{\vee}$ is principal as a fractional ideal, i.e., $R^{\vee} = tR$, then each non-dual instance is equivalent to a dual instance, by mapping a sample $(a,b)$ to $(a,tb)$, and vice versa.  If $R^\vee$ is not principal, there are still inclusions $R^\vee \subset t_1R$ and $R \subset t_2R^\vee$, so that one can reduce dual and non-dual versions of the problem to one another.  In either case, the reduction comes at the cost of distorting the error distribution.

For the infinite family constructed in Section~\ref{section: family}, the dual ring  $R^{\vee}$ is indeed principal (see Lemma~\ref{field theory} in Section~\ref{section: family}).  Note that multiplying by this field element $t$ changes a spherical Gaussian to an elliptical Gaussian, so the two equivalent instances will have different error shapes.

Elliptical Gaussians are the most important class of error distributions for general rings, since in \cite[Theorem 4.1]{lyubashevsky2013ideal}, the reduction from hard lattice problems is to a class of RLWE problems where the distributions are elliptical Gaussians. 
Theorem~5.2 of  \cite{lyubashevsky2013ideal} provides a further security reduction to decision RLWE with spherical Gaussian errors, but it is only stated for cyclotomic rings.

\subsection{Comparing discrete and continuous errors}

Restricting now to the non-dual setup, there are still two variants of RLWE based on the form of the spherical errors: the {\it continuous} variant samples errors from spherical Gaussian on the space $K_\bR = \iota( K \otimes_\bQ \bR)$ (here we extend $\iota$ linearly), so that samples have the form
\[
 (a, b = as + e) \in R_q \times K_\bR/qR,        
\]
whereas the {\it discrete} variant samples from a discrete Gaussian $D_{\iota{R},r}$ on the lattice $R$, as defined above.

There is no known equivalence between the discrete problem and its continuous counterpart in general.  However, the continuous problem reduces to the discrete one.  Specifically, given a continuous sample $(a, b) \in R_q \times  K_\bR/qR$, one can perform a rounding on the second coordinate to get a discrete sample $(a,[b]) \in R_q \times R_q$.
However, there is no obvious map in the reverse direction.  


\subsection{Search and decision RLWE problems}

Let $\fq$ be a prime ideal of $K$ lying above $q$; then the RLWE problem modulo $\fq$ means discovering $s \mod{\fq}$ from arbitrarily many RLWE samples. In \cite{lyubashevsky2013ideal} the authors gave a polynomial time reduction from search to decision for cyclotomic number fields and totally split primes, using the RLWE modulo $\fq$ as an intermediate problem. Their proof can be applied to prove a similar search-to-decision reduction for non-dual RLWE, when the underlying number field is Galois and the modulus $q$ is unramified \cite{cryptoeprint:2015:971,eisentrager2014weak}. Moreover, the search-to-decision is most efficient when the residue degree of $q$ is small.  What is important in our paper is that for the instances in Section 3 and 4, our attacks on RLWE modulo $\fq$ could be efficiently transferred to attack the search problem. 

\subsection{Comparing non-dual RLWE with PLWE for 2-power cyclotomic fields}

For cryptographic applications, it is perhaps natural to consider the PLWE error distribution on $R$: assuming the ring $R$ is monogenic, i.e., $R = \bZ[x]/(f(x))$, then a sample from the PLWE error distribution is $e = \sum_{i=0}^{n-1} e_i x^i$, where the $e_i$ are ``small errors'', sampled independently from some error distribution over $\bZ$ (e.g. a discrete Gaussian distribution). 

In general number fields, a PLWE distribution differs greatly from the non-dual RLWE distribution (see [ELOS] for an effort to quantify the distance between the two distributions using spectral norms). However, for 2-power cyclotomic fields it turns out that the two error distributions are equivalent up to a factor of $\sqrt{n}$. Since this fact is used in Section 5, we give a proof below. 

\begin{Lemma}
Let $m = 2^d$ be a power of 2 and let $R = \bZ[\zeta_m]$. Consider the PLWE error distribution on $R$, i.e. samples
$e = \sum_{i=0}^{n-1} e_i \zeta_{m}^i$, where $n = m/2$ and each $e_i$ follows the discrete Gaussian $D_{\bZ,r}$. Then this PLWE distribution is equal to the non-dual RLWE distribution $D_{\iota(R), r\sqrt{n}}$. 
\end{Lemma}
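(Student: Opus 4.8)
The plan is to realize both distributions as pushforwards of the same product distribution $D_{\bZ,r}^{\otimes n}$ on $\bZ^n$ under two linear maps that differ only by an orthogonal transformation rescaled by $\sqrt n$. Let $B$ be the real $n\times n$ matrix whose $i$-th column is $\iota(\zeta_m^i)$ for $0\le i\le n-1$; since $1,\zeta_m,\dots,\zeta_m^{n-1}$ is a $\bZ$-basis of $R=\bZ[\zeta_m]\cong\bZ[x]/(x^n+1)$, the map $\mathbf e=(e_0,\dots,e_{n-1})\mapsto \iota\!\left(\sum_i e_i\zeta_m^i\right)=B\mathbf e$ is a bijection $\bZ^n\to\iota(R)$. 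The PLWE sample $e=\sum_i e_i\zeta_m^i$ with $e_i$ i.i.d.\ $D_{\bZ,r}$ has coordinate density proportional to $\rho_r(\mathbf e)$, so, viewed on $\iota(R)$, the PLWE distribution is the pushforward of $D_{\bZ^n,r}$ by $B$. I would prove the lemma by establishing $B^TB=nI_n$: then $B$ is invertible with $B^{-1}=\tfrac1n B^T$, hence $BB^T=nI_n$ as well, so for $v=B\mathbf e$ one gets $\|\mathbf e\|^2=\|B^{-1}v\|^2=v^T(BB^T)^{-1}v=\tfrac1n\|v\|^2$, and the pushforward mass at $v$ is proportional to $\rho_r(B^{-1}v)=e^{-\|v\|^2/(nr^2)}=\rho_{r\sqrt n}(v)$. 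Normalizing over $v\in\iota(R)$ gives exactly $D_{\iota(R),r\sqrt n}$.

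Thus everything reduces to the orthogonality relation $\langle\iota(\zeta_m^i),\iota(\zeta_m^j)\rangle=n\,\delta_{ij}$. First I would record that, owing to the $\sqrt2$-rescaling built into the adjusted canonical embedding, $\iota$ carries the canonical Hermitian form to the standard inner product: for all $x,y\in K$,
\[
\langle\iota(x),\iota(y)\rangle=\sum_{k=1}^{n}\sigma_k(x)\,\overline{\sigma_k(y)},
\]
the point being that for a $2$-power $m$ there are no real embeddings ($r_1=0$), and each conjugate pair $\sigma,\bar\sigma$ contributes $2\bigl(\Re\sigma(x)\Re\sigma(y)+\Im\sigma(x)\Im\sigma(y)\bigr)=2\Re\bigl(\sigma(x)\overline{\sigma(y)}\bigr)=\sigma(x)\overline{\sigma(y)}+\bar\sigma(x)\overline{\bar\sigma(y)}$. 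Applying this with $x=\zeta_m^i$, $y=\zeta_m^j$, setting $\ell=i-j$, and recalling that $\sigma_k(\zeta_m)=\zeta_m^k$ with $k$ ranging over $(\bZ/m\bZ)^\times$, I obtain the character sum $\langle\iota(\zeta_m^i),\iota(\zeta_m^j)\rangle=\sum_{k\in(\bZ/m\bZ)^\times}\zeta_m^{k\ell}$.

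Finally I would evaluate this sum, which is where the $2$-power hypothesis enters cleanly: $\sum_{k\ \mathrm{odd}}\zeta_m^{k\ell}=\sum_{k=0}^{m-1}\zeta_m^{k\ell}-\sum_{k=0}^{m/2-1}\zeta_{m/2}^{k\ell}=m\,[\,m\mid\ell\,]-\tfrac m2\,[\,\tfrac m2\mid\ell\,]$. For $i=j$ this is $m-m/2=m/2=n$; for $i\ne j$ we have $0<|\ell|\le n-1<m/2$, so neither divisibility condition holds and the sum vanishes. This gives $B^TB=nI_n$ and completes the argument. The one genuinely delicate step is the bookkeeping of the $\sqrt2$ factors, i.e.\ checking that $\iota$ is an honest isometry from $(K\otimes_\bQ\bR,\text{trace form})$ onto $(\bR^n,\text{standard inner product})$ so that no hidden distortion is introduced; the character-sum evaluation itself is routine and specific to prime-power conductor.
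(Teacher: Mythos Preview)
Your proof is correct and follows essentially the same approach as the paper: both arguments reduce to the single identity $\lVert\iota(x)\rVert=\sqrt{n}\,\lVert x\rVert$ for the coefficient vector $x$, which you phrase as $B^TB=nI_n$ and the paper simply asserts with ``one checks that $\lVert\iota(x)\rVert=\sqrt{n}\,\lVert x\rVert$''. Your detailed verification via the character sum $\sum_{k\in(\bZ/m\bZ)^\times}\zeta_m^{k\ell}$ is a welcome unpacking of that step, but it is the same core computation, not a different route.
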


\begin{proof}
For an element $x = \sum_{i=0}^{n-1} x_i\zeta_m^i \in R$, the probability of $x$ being sampled by the PLWE distribution is proportional to $\prod_{i=0}^{n-1} \rho_r( x_i) = \prod_{i=0}^{n-1} e^{-x_i^2/r^2} =  e^{- ||x||^2/r^2}$. 
On the other hand,  one checks that $||\iota(x)|| = \sqrt{n} ||x||$.  So the above probability is proportional to $e^{- ||\iota(x)||^2/nr^2}$, which is the exactly the same for the distribution $D_{\iota(R), r\sqrt{n}}$. This completes the proof. 
\end{proof}

\subsection{Scaling factors}

As pointed out in \cite{elos2015weak}, when analyzing the non-dual RLWE error distribution, one needs to take into account the sparsity of the lattice $\iota(R)$, measured by its covolume in $\bR^n$. This covolume is equal to $|\disc(K)|^{1/2}$. In light of this, we define the scaled error width to be $$r_0 = \frac{r}{|\disc(K)|^{\frac{1}{2n}}}.$$


\subsection{Overview of attack}
\label{subsec: attack}

We briefly review the method of attack in Section 4 of \cite{cryptoeprint:2015:971}.  The basic principle of this family of attacks is to find a homomorphism 
\[
        \rho: R_q \rightarrow F
\]
to some small finite field $F$, such that the error distribution on $R_q$ is transported by $\rho$ to a non-uniform distribution on $F$.  In this case, errors can be distinguished from elements uniformly drawn from $R_q$ by a statistical test in $F$, for example, by a $\chi^2$-test.  The existence (or non-existence) of such a homomorphism depends on the parameters of the field, prime, and distribution in the setup of RLWE.  In this section, we will describe parameters under which such a map exists.

Once such a map is known, the basic method of attack on Decision RLWE is as follows:
\begin{enumerate}
        \item Apply $\rho$ to samples $(a,b)$ in $R_q \times R_q$, to obtain samples in $F \times F$.
        \item Guess the image of the secret $\rho(s)$ in $F$, calling the guess $g$.
        \item Compute the distribution of $\rho(b) - \rho(a)g$ for all the samples.  If $g = \rho(s)$, this is the image of the distribution of the errors.  Otherwise it is the image of a uniform distribution.
        \item If the image looks uniform, try another guess $g$ until all are exhausted.  If any non-uniform distribution is found, the samples are RLWE samples.  Otherwise they are not.
\end{enumerate}

Whenever $\mathfrak{q}$ is a prime ideal lying above $q$, then reduction modulo $\mathfrak{q}$ is a valid map
\[
        \rho : R_q \rightarrow R_\mathfrak{q}
\]
for the attack above.  This attack targets the RLWE modulo $\mathfrak{q}$ problem for some prime $\mathfrak{q}$ lying above $q$, and as noted above, it can be turned into an attack on the search variant of the problem, whenever $q$ is unramified and $K$ is Galois.

\subsection{Comparison to related works}

In an independent preprint (\cite{cryptoeprint:2016:240}) which appeared on eprint around the same time as our preprint, Castryck et al. also constructed an infinite family of vulnerable Galois number fields, where the error width can be taken to be $O(|\disc(K)|^{\frac{1}{n} - \epsilon})$ for any $\epsilon > 0$. The asymptotic error width they obtained is wider than in our infinite family in Section 2.  However, the method of attack is an errorless LWE linear algebra attack (based on short vectors), whereas our family is not susceptible to a linear algebra attack, and requires the novel techniques presented here and in \cite{cryptoeprint:2015:971}.



\section{An improved attack using cosets}

In this section, we describe an improvement to our chi-square attack on RLWE mod $\mathfrak{q}$ outlined in Section~\ref{subsec: attack} for a special case. As a result, we have an updated version of \cite[Table 1]{cryptoeprint:2015:971}, where we attacked each instance in the table in much shorter time. Note that the complexity of the previous attack in this special case is $O(nq^3)$. In contrast, our new attack has complexity $O(nq^2)$.

To clarify, the special case we consider in this section is characterised by the following assumptions (we need not be in the special family of the next section):
\begin{itemize}
\item The modulus $q$ is a prime of residue degree 2 in the number field $K$.
\item There exists a prime ideal $\fq$ above $q$ such that the map $\rho: R_q \to R_\fq$ satisfies the following property:
Let $e \in R_q$ be taken from the discrete RLWE error distribution. The probability that $\rho(e)$ lies in the prime subfield $\bF_q$ of $\bF_{q^2}$ is computationally distinguishable from $1/q$.
\end{itemize}

Granting these assumptions, we can distinguish the distribution of the ``reduced error" $\rho(e)$ from the uniform distribution on $\bF_{q^2}$. More precisely, the attack in \cite{cryptoeprint:2015:971} works exactly as we described in Section~\ref{section: family}:  with access to $\Omega(q)$ samples, one loops over all $q^2$ possible values of $\rho(s)$. It detects the correct guess $\rho(s)$ based on a chi-square test with two bins $\bF_q$ and $\bF_{q^2} \setminus \bF_q$.

The distinguishing feature of the improved attack is to loop over the cosets of $\bF_{q}$ of $\bF_{q^2}$ instead of the whole space. Fix $t_1, \cdots t_q$ to be a set of coset representatives for the additive group $\bF_{q^2}/\bF_q$. Recall that $s$ denotes the secret and $\rho: R_q \to R_\fq \cong \bF_{q^2}$ is a reduction map modulo some fixed prime ideal $\fq$ lying above $q$.  Then there exists a unique index $i$ such that $\rho(s)= s_0 + t_i$ for some $s_0 \in \bF_q$. Our improved attack will recover $s_0$ and $t_i$ separately.  \\

We start with an identity $b = as+e$, where $a,b,s,e \in \bF_{q^2}$. We will regard $s$ as fixed and $a,b,e$ as random variables, such that $a$ is uniformly distributed in $\bF_{q^2} \setminus \bF_q$ and $b$ is uniformly distributed in $\bF_{q^2}$. The reason why $a$ is not taken to be uniform will become clear later in this section.  We use a bar to denote the Frobenius automorphism, i.e.,
\[
	\bar{a} \stackrel{def}{=} a^q, \, \forall a \in \bF_{q^2}.
\]
Then $\bar{b} = \bar{a}\bar{s} + \bar{e}$. Using the identity $s = s_0 + t_i$ and subtracting,  we obtain  $\bar{b} - b - \overline{at_i} + at_i= s_0(\bar{a} - a) + \bar{e} - e$. Since $a \neq \bar{a}$, we can divide through by $\bar{a}-a$ and get
\[
\frac{\bar{b} - b - \overline{at_i} + at_i}{\bar{a}-a} = s_0 + \frac{\bar{e} - e}{\bar{a}-a}. \tag{**}
\]
Now for each $1 \leq j \leq q$, we can compute
\[
        m_j(a,b) := \frac{\bar{b} - b - \overline{at_j} + at_j}{\bar{a}-a}
\]
with access to $a$ and $b$, but without knowledge of $s$ or $s_0$.
Note that $m_j$ is in the prime field $\bF_q$ by construction.


\begin{Proposition} \label{dist of mj} For each $1 \leq j \leq q$, \\
        (1) If $j \neq i$, then $m_j(a,b)$ is uniformly distributed in $\bF_q$, for RLWE samples $(a,b)$.  \\
        (2) If $j = i$, then $m_j(a,b) = s_0 + \frac{\bar{e} - e}{\bar{a}-a}$.
\end{Proposition}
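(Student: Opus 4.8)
The plan is to dispatch the two parts separately. Part (2) is immediate: specializing $j=i$ in the definition of $m_j(a,b)$, the numerator is exactly $\bar b - b - \overline{at_i} + at_i$, which by the computation carried out just before the proposition equals $s_0(\bar a - a) + \bar e - e$; dividing by $\bar a - a$ gives $m_i(a,b) = s_0 + \frac{\bar e - e}{\bar a - a}$, which is precisely identity $(**)$.

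The content is therefore part (1); fix $j \neq i$. First I would substitute $b = as + e$ and $s = s_0 + t_i$ into $m_j(a,b)$ and run the same manipulation that produced $(**)$, carrying along the extra term $\overline{at_j} - at_j$ and using $\overline{s_0}=s_0$. This collapses $m_j$ to the clean shape
\[
 m_j(a,b) \;=\; s_0 + \frac{\overline{e'} - e'}{\bar a - a}, \qquad e' := e + ac, \quad c := t_i - t_j .
\]
Because $t_i$ and $t_j$ represent distinct cosets of $\bF_{q^2}/\bF_q$, we have $c \notin \bF_q$, hence $\bar c \neq c$; this is the only place the hypothesis $j\neq i$ enters. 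Let $V := \{\, y \in \bF_{q^2} : \bar y = -y \,\}$, the kernel of the $\bF_q$-linear trace map $y \mapsto y + \bar y$; since the trace is surjective onto $\bF_q$, $V$ is a one-dimensional $\bF_q$-subspace (this holds in every characteristic: if $q$ is even, $V = \bF_q$). One checks that $\overline{e'}-e'$, $\bar c - c$, and $\bar a - a$ all lie in $V$, which re-confirms that $m_j \in \bF_q$.

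The key step is to prove that, for every fixed value of $e$, the random variable $m_j(a,b)$ is uniform on $\bF_q$ as $a$ ranges uniformly over $\bF_{q^2}\setminus\bF_q$; the statement for RLWE samples then follows by averaging over the error $e$, which is independent of $a$, whatever its distribution. For the uniformity I would reparametrize $a$: the $\bF_q$-linear map $a \mapsto \bar a - a$ has kernel $\bF_q$ and image $V$, so it restricts to a surjection $\bF_{q^2}\setminus\bF_q \to V\setminus\{0\}$ all of whose fibres are $\bF_q$-cosets of size $q$; choosing a representative $a_w$ in each fibre gives a bijection from $\bF_{q^2}\setminus\bF_q$ onto $(V\setminus\{0\})\times \bF_q$, sending $a$ to $(w,v)$ with $w = \bar a - a$ and $a = a_w + v$, $v \in \bF_q$, under which a uniform $a$ corresponds to $w$ uniform on $V\setminus\{0\}$ together with $v$ uniform on $\bF_q$. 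Expanding (and using $\bar v = v$ for $v\in\bF_q$),
\[
 \overline{e'} - e' \;=\; \big(\overline{e + a_w c} - (e + a_w c)\big) + v(\bar c - c) \;=\; \alpha_w + v\beta,
\]
where $\alpha_w \in V$ depends only on $w$ (and $e$) and $\beta := \bar c - c \in V\setminus\{0\}$, so that $m_j(a,b) = s_0 + (\alpha_w + v\beta)/w$. Since $V$ is a line over $\bF_q$ and $w \neq 0$, division by $w$ is an $\bF_q$-linear isomorphism from $V$ onto $\bF_q$; hence for each fixed $w$ the map $v \mapsto s_0 + (\alpha_w + v\beta)/w$ is affine with nonzero slope $\beta/w$, i.e. a bijection $\bF_q \to \bF_q$. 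Therefore $m_j(a,b)$ is uniform on $\bF_q$ for $v$ uniform and $w$ arbitrary; averaging over $w$, then over $e$, gives (1).

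I expect the only real obstacle to be bookkeeping: verifying that the substitution genuinely collapses to $m_j = s_0 + (\overline{e'}-e')/(\bar a - a)$, and making the reparametrization airtight — in particular that the fibres of $a \mapsto \bar a - a$ over $V\setminus\{0\}$ all have size exactly $q$ and lie entirely inside $\bF_{q^2}\setminus\bF_q$, so that a uniform $a$ really does correspond to an independent uniform pair $(w,v)$. Everything else is linear algebra over $\bF_q$ together with the one structural fact that $V$ is one-dimensional.
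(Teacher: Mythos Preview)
Your proof is correct. Both parts are handled properly: part~(2) is exactly the identity $(**)$, and for part~(1) your substitution $e' = e + ac$ with $c = t_i - t_j$ is valid (one checks $\bar b - b - \overline{at_j} + at_j = s_0(\bar a - a) + \overline{e'} - e'$ directly), and the fibering argument is sound --- the map $a \mapsto \bar a - a$ has kernel $\bF_q$ and image the one-dimensional space $V$, so its restriction to $\bF_{q^2}\setminus\bF_q$ has fibres of size $q$ mapping onto $V\setminus\{0\}$, and for fixed $w$ the dependence on the fibre coordinate $v$ is affine with nonzero slope $(\bar c - c)/w$.

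The paper's route is organized a little differently. Rather than absorbing $ac$ into the error, it keeps the decomposition $m_j = g_\delta + s_0 + (\bar e - e)/(\bar a - a)$ with $g_\delta = (\overline{a\delta} - a\delta)/(\bar a - a)$ and $\delta = t_i - t_j$, and proves a standalone lemma: the $\bF_q$-linear map $f_\delta : a \mapsto (\bar a - a,\ \overline{a\delta} - a\delta)$ is an isomorphism $\bF_{q^2} \to V^2$, so the pair $(\bar a - a,\ \overline{a\delta} - a\delta)$ is uniform on $(V\setminus\{0\})\times V$ when $a$ is uniform on $\bF_{q^2}\setminus\bF_q$. From there the uniformity of $m_j$ is extracted by an explicit probability computation summing over $x+y=z$. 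Your conditioning argument replaces that sum: fixing $w$ and $e$ and observing that $v \mapsto m_j$ is a bijection is more direct and avoids the bookkeeping of the convolution-style calculation. Conversely, the paper's two-dimensional isomorphism $f_\delta$ packages the structural fact once and for all, without needing to choose section points $a_w$. The underlying linear algebra over $\bF_q$ (in particular the role of the line $V$) is the same in both arguments.
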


We postpone the proof of Proposition~\ref{dist of mj} until the end of this section. Assuming the proposition, our improved attack works as follows: for $1 \leq j \leq q$,  we compute a set of $m_j$ from the samples. To avoid dividing by zero, we ignore the samples with $\rho(a) \in \bF_q$ (which happens with probability $1/q$ since $\rho(a)$ is uniformly distributed). We then run a chi-square test on the $m_j$ values. If $j \neq i$, then the distribution should be uniform; if $j = i$, then $P(m_i = s_0) = P(e \in \bF_q)$, which by our assumption is larger than $1/q$. Hence if we plot the histogram of the $m_i$ computed from the samples, we will see a spike at $s_0$. So we could recover $s_0$ as the element with the highest frequency, and output $\rho(s) = s_0 + t_i$. We give the pseudocode of the attack below.

\begin{algorithm}
\caption{Improved chi-square attack on RLWE modulo $\fq$)}
 \label{alg: chi-square}        
\begin{algorithmic} 
    \Require  $K$ -- a number field; $R$ -- the ring of integers of $K$; $\fq$ -- a prime ideal in $K$ above $q$ with residue degree 2; $\mathcal{S}$ -- a collection of $M$ RLWE samples;  $\beta > 0$ -- the parameter used for comparing $\chi^2$
values.
    \Ensure a guess of the value $s \pmod{\fq}$, or {\bf NOT-RLWE}, or {\bf INSUFFICIENT-SAMPLES}
    \State Let $\cG \gets \emptyset$.
    \For{$j$ in $1, \ldots, q$}
        \State $\cE_j \gets \emptyset$.
        \For{$a,b$ in $\mathcal{S}$}
            \State $\bar{a}, \bar{b} \gets a \pmod{\fq}, b \pmod{\fq}$.
            \State $m_j \gets \frac{\bar{b} - b - \overline{at_j} + at_j}{\bar{a}-a}.  $
            \State add $m_j$ to $\cE_j$.
        \EndFor

        \State  Run a chi-square test for uniform distribution on $\cE_j$.

        \If{$\chi^2(\cE_j) >  \beta$}
            \State $s_0 := $ the element(s) in $\cE_j$ with highest frequency.
            \State $s \gets s_0 + t_j$, add $s$ to $\cG$.
        \EndIf
    \EndFor
    \If{$\cG = \emptyset$}

        \Return {\bf NOT-RLWE}
    \ElsIf{$\cG = \{s\}$ is a singleton}

        \Return $s$
    \Else

        \Return {\bf INSUFFICIENT-SAMPLES}
    \EndIf

\end{algorithmic}
\end{algorithm}

We analyze the complexity of our improved attack. There are $q$ iterations, each operating on $O(q)$ samples, and reduction of each sample is $O(n)$. So our new attack has complexity $O(nq^2)$.
\subsection{Examples of successful attacks}

To illustrate the idea, we apply our improved attack to the instances in Table 1 of \cite{cryptoeprint:2015:971}. Comparing the last column with the current Table~\ref{tab: attacked}, we see that the runtime has been improved significantly.
\begin{table}[h]
\caption{RLWE instances under our improved attack}
\label{tab: attacked}
\begin{center}
\scalebox{1}{
\begin{tabular}{c|c|c|c|c|c|c}
$n$ & $q$ & $f$ & $r_0$ & no. samples & old runtime (in minutes) & new runtime (in minutes) \\ \hline
 40 & 67 & 2 & 2.51 & 22445 & 209 & 3.5 \\
60 & 197 & 2 & 2.76 & 3940 &  63 & 2.4 \\
 60 & 617 & 2 & 2.76 & 12340 & 8.2 $\times 10^5$ (est.) & 21.3  \\
 80 & 67 & 2 & 2.51 & 3350 & 288.6 & 0.5 \\
 90 & 2003 & 2 & 3.13 & 60090 & 6.6 $\times 10^4$ (est.) &  305 \\
96 & 521  & 2 & 2.76 & 15630 & 4.5 $\times 10^3$ (est.) & 21.7 \\
 100 & 683 & 2 & 2.76 & 20490 &  1.6 $\times 10^4$(est.) & 36.5  \\
 144 & 953 & 2 & 2.51 &  38120 & 342.6 &  114.5
\end{tabular}
}
\end{center}
\end{table}

\subsection{Proof of Proposition~\ref{dist of mj}}



For notational convenience, we let $A_q$ denote the set $\bF_{q^2} \setminus \bF_q$.
\begin{Lemma} \label{lem: prob}
Let the random variable $a$ be uniformly distributed in $A_q$. Suppose $e$ is a random variable with value in $\bF_{q^2}$ independent of $a$. Fix $\delta \in A_q$ and $s_0 \in \bF_q$. Then
$$m_\delta = g_\delta + s_0 + \frac{\bar{e} - e}{\bar{a}-a}$$
is uniformly distributed in $\bF_q$.
Here $$g_\delta = \frac{\overline{a\delta} - a\delta}{\bar{a}-a}.$$
\end{Lemma}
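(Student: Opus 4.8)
The plan is to exploit the linear structure of the ``Frobenius difference'' map $\phi : \bF_{q^2} \to \bF_{q^2}$, $\phi(x) = \bar{x} - x = x^q - x$. First I would record the facts I need about $\phi$. It is $\bF_q$-linear (because $c^q = c$ for $c \in \bF_q$), its kernel is exactly the fixed field $\bF_q$, and consequently its image is a one-dimensional $\bF_q$-subspace $V_0 \subset \bF_{q^2}$ (indeed $V_0 = \ker(\mathrm{Tr}_{\bF_{q^2}/\bF_q})$, since $\mathrm{Tr}(x^q-x)=0$ and dimensions match). In particular $\phi(x) = 0$ if and only if $x \in \bF_q$, so $\phi(a) \neq 0$ precisely because $a \in A_q$, and likewise $\phi(\delta) \neq 0$ precisely because $\delta \in A_q$; this last point is the only place the hypothesis $\delta \in A_q$ is used. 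Since $V_0$ is one-dimensional over $\bF_q$, the quotient of any element of $V_0$ by any nonzero element of $V_0$ lies in $\bF_q$; in particular $g_\delta = \phi(a\delta)/\phi(a) \in \bF_q$ and $(\bar e - e)/(\bar a - a) = \phi(e)/\phi(a) \in \bF_q$, which already shows that $m_\delta$ takes values in $\bF_q$.

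Next I would collapse $m_\delta$ into a single fraction: by $\bF_q$-linearity of $\phi$,
\[
 m_\delta \;=\; s_0 + \frac{\phi(a\delta) + \phi(e)}{\phi(a)} \;=\; s_0 + \frac{\phi(a\delta + e)}{\phi(a)}.
\]
Now condition on the value $t := \phi(a)$. As $a$ ranges uniformly over $A_q$, the value $t$ ranges over $V_0 \setminus \{0\}$, and for each such $t$ the fiber $\phi^{-1}(t)$ is a single coset $a_0 + \bF_q$ of $\ker\phi = \bF_q$, which is entirely contained in $A_q$; hence, conditionally on $\phi(a) = t$, the variable $a$ is uniform on $a_0 + \bF_q$, i.e.\ $a = a_0 + c$ with $c$ uniform in $\bF_q$ and (by independence of $e$ from $a$) still independent of $e$. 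Then $\phi(a\delta + e) = \phi(a_0\delta + e) + c\,\phi(\delta)$, and because $\phi(\delta) \neq 0$ spans $V_0$, the map $c \mapsto c\,\phi(\delta)$ is a bijection $\bF_q \to V_0$; so adding the uniform, independent term $c\,\phi(\delta)$ makes $\phi(a\delta + e)$ uniform on $V_0$ conditionally on the pair $(\phi(a) = t,\ e)$. Dividing by the fixed nonzero $t \in V_0$ converts uniformity on $V_0$ into uniformity on $\bF_q$, and adding the constant $s_0$ preserves it. Thus $m_\delta$ is uniform on $\bF_q$ conditionally on every value of $(\phi(a), e)$, hence unconditionally, which is the claim.

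I do not expect a genuine obstacle: once $\phi$ is in hand the argument is elementary linear algebra over $\bF_q$ plus a standard conditioning step. The step I would flag as the crux is the correct use of the hypothesis $\delta \in A_q$, equivalently $\phi(\delta) \neq 0$ --- this is exactly what guarantees that $c \mapsto c\,\phi(\delta)$ surjects onto $V_0$, which drives the uniformity. A second point worth stating explicitly (it is what lets the lemma later be applied to the true reduced-error distribution) is that nothing about the distribution of $e$ is used beyond its independence from $a$: in each conditioning branch $e$ is simply a fixed element, shifted away harmlessly. If one preferred a hands-on version, the same computation can be carried out in coordinates by writing $\bF_{q^2} = \bF_q(\omega)$ with $\bar\omega = -\omega$ (valid for odd $q$, the only relevant case) and tracking the $\omega$-component of $a$, but the coordinate-free argument above is shorter and avoids the degenerate $q=2$ case entirely.
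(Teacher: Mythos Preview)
Your proof is correct. Both your argument and the paper's rest on the same $\bF_q$-linear map $\phi(x)=\bar x - x$ with kernel $\bF_q$ and one-dimensional image (the paper's $V=\{x:\bar x=-x\}$ is your $V_0$), and both identify $\phi(\delta)\neq 0$ as the crux. The execution differs, however. The paper shows that $f_\delta : a \mapsto (\phi(a),\phi(a\delta))$ is an $\bF_q$-linear isomorphism $\bF_{q^2}\to V^2$, so that for $a$ uniform on $A_q$ the pair $(\phi(a),\phi(a\delta))$ is uniform on $(V\setminus\{0\})\times V$, and then evaluates $P(g_\delta+e'=z)$ by an explicit double sum over $(x,y,c)$ using independence of $e$. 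You instead condition on $\phi(a)=t$, observe that $a$ is then uniform on a single $\bF_q$-coset contained in $A_q$, and absorb $e$ into the numerator via $\phi(a\delta)+\phi(e)=\phi(a\delta+e)$ before invoking the bijection $c\mapsto c\,\phi(\delta)$ to produce a uniform $V_0$-valued translate. This sidesteps the final summation entirely and reduces the role of $e$ to a one-line observation; the paper's isomorphism route, on the other hand, yields the slightly stronger intermediate fact that $(\phi(a),\phi(a\delta))$ is jointly uniform, at the cost of the closing computation.
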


\begin{proof}
Since the uniform distribution is invariant under translation, we may assume $s_0 = 0$. We introduce a new set
$V = \{x \in \bF_{q^2}: \bar{x} = -x\}$. We claim that for any $c,d \in V$ with $c \neq 0$, we have $P(\bar{a} - a = c, \overline{a\delta} - a\delta = d) = \frac{1}{q(q-1)}$.  To prove the claim, note that $V$ is an $\bF_q$-vector space of dimension one, and we have the following $\bF_q$-linear map $f_\delta: \bF_{q^2} \to V^2$.
\[
	f_\delta :  a \mapsto ( \bar{a }  - a, \overline{a\delta} - a\delta).
\]
First we show $f_\delta$ is injective: if $f_\delta(a) = 0$, then $a \in \bF_q$ and thus $a(\bar{\delta} - \delta) = 0$, so $a = 0$. By dimension counting, $f_\delta$ is an isomorphism. Restricting to $A_q$, we see that $f_\delta|_{A_q}$ gives an isomorphism between $A_q$ and $(V \setminus \{0\}) \times V$. This proves the claim.

Let $e' = \frac{\bar{e} - e}{\bar{a}-a}$. For any $z \in \bF_q$,  we have
\begin{align*}
& \quad P(g_\delta + e' = z) \\
& = \sum_{x+y = z} P(g_\delta =x, e' =y) \\
& = \sum_{x+y = z} \sum_{c \in V \setminus \{0\}} P(\bar{a\delta} - a\delta = xc, \bar{e} - e = yc, \bar{a} - a = c) \\
& = \sum_{x+y = z, c\in V \setminus \{0\}} P(\bar{a\delta} - a\delta = xc , \bar{a} - a = c)  P(\bar{e} - e = yc) \\
& = \frac{1}{q(q-1)}\sum_{y \in \bF_q , c\in V \setminus \{0\}} P(\bar{e} - e = yc) \\
& = \frac{1}{q(q-1)}  \cdot (q-1) \sum_{c' \in V} P(\bar{e} - e = c') \\
& = \frac{1}{q}.
\end{align*}
\end{proof}

\begin{proof}[of Proposition~\ref{dist of mj}]
The second claim follows directly from (1). For the first claim, let $\delta = t_i - t_j$. Then
$m_j \sim g_\delta + s_0 + \frac{\bar{e} - e}{\bar{a}-a},$ where $g_\delta = \frac{\overline{a\delta} - a\delta}{\bar{a}-a}$.
Now the first claim is precisely Lemma~\ref{lem: prob}.
\end{proof}

\section{Infinite family of vulnerable Galois RLWE instances}
\label{section: family}

Recall that a number field $K$ of degree $n$ is {\it Galois} if it has exactly $n$ automorphisms.
In this section, we describe Galois number fields which are vulnerable to the attack outlined in Section~\ref{subsec: attack} 
In contrast to the vulnerable instances found by computer search in Section 5 of \cite{cryptoeprint:2015:971}, in this section we explicitly construct infinite families of such fields with flexible parameters.
Furthermore, the attacks of \cite{cryptoeprint:2015:971} were successful only on instances where the size of the distribution (in the form of the scaled standard deviation) is a small constant, where as in this paper the scaled standard deviation parameter can be taken to be $o(|d|^{1/4})$, where $d$ is an integer parameter and can go to infinity.


To set up, let $p$ be an odd prime and let $d > 1$ be a squarefree integer such that $d$ is  coprime to $p$ and $d \equiv 2,3 \mod{4}$. We choose an odd prime $q$ such that
\begin{enumerate}
\item[(1)] $q \equiv 1 \pmod{p}$.
\item[(2)] $\left( \frac{d}{q} \right) = -1$  (equivalently, the prime $q$ is inert in $\bQ(\sqrt{d})$).
\end{enumerate}

\begin{remark}
Fix a pair  $(p,d)$ that satisfies the conditions described above.  By quadratic reciprocity, condition (2) on $q$ above is a congruence condition modulo $4d$.  So by Dirichlet's theorem on primes in  arithmetic progressions, there exists infinitely many primes $q$ satisfying both (1) and (2).
\end{remark}

Let $M = \bQ(\zeta_p)$ be the $p$-th cyclotomic field and $L = \bQ(\sqrt{d})$. Let $K = M  \cdot L$ be the composite field and let $\cO_K$ denote its ring of integers.  


\begin{theorem} \label{thm in sec2}
Let $K$ and $q$ be as above, and $R_q$ defined as in the preliminaries in terms of $K$ and $q$.  Suppose $\fq$ is a prime ideal in $K$ lying over $q$. We consider the reduction map $\rho  : R/qR \to R/\fq R \cong \mathbb{F}_{q^f}$, where $f$ is the residue degree. Suppose $\cD$ is the RLWE error distribution with error width $r$ such that $r < 2\sqrt{\pi d}$. Let
 \[
	\beta = \min \left\{ \left(\frac{\sqrt{4\pi e d}}{r} e^{-\frac{2\pi d}{r^2}} \right)^n, 1 \right\}.
\]
Then, for $x \in R_q$ drawn according to $\cD$, we have $\rho(x) \in \mathbb{F}_q$ with probability at least $1-\beta$. 
\end{theorem}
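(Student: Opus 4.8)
The plan is to show that an element $x$ drawn from the discrete Gaussian $D_{\iota(R),r}$ almost surely reduces into the prime subfield $\bF_q \subset \bF_{q^f}$, by showing that the contribution to the Gaussian mass coming from $x$ with $\rho(x) \notin \bF_q$ is small. The key structural observation is that $K = M\cdot L$ with $L = \bQ(\sqrt d)$ inert at $q$ (condition (2)); the residue field $R/\fq \cong \bF_{q^f}$ therefore contains the quadratic subextension's residue field, and $\rho(x) \in \bF_q$ holds if and only if $x$ reduces into the subring coming from $M$ — equivalently, writing $x = u + v\sqrt d$ with $u,v \in \cO_M$ (using $d\equiv 2,3\bmod 4$ so that $\cO_K = \cO_M[\sqrt d]$), the condition is $v \equiv 0 \pmod{\fq\cap\cdots}$, and in fact $\rho(x)\in\bF_q$ fails precisely when $\sqrt d \cdot v$ does not vanish mod $\fq$, which — because $q$ is inert in $L$ — forces $v$ itself to be nonzero. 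So the "bad" set is contained in $\{x = u + v\sqrt d : v \neq 0\}$, and since $v\neq 0$ integral in $\cO_M$ forces $\|\iota_L(\sqrt d \cdot v)\|$ to be bounded below, the bad part of the lattice $\iota(R)$ is a union of cosets of a sublattice, each coset pushed out by a fixed minimum distance.

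Concretely, I would first set $\Lambda = \iota(R)$ and let $\Lambda_0 \subset \Lambda$ be the sublattice $\iota(\cO_M)$ (embedded via the $\sqrt d \mapsto 0$ component, i.e. the part with $v = 0$); the bad set $B = \{x : \rho(x)\notin\bF_q\}$ lies in $\Lambda \setminus \Lambda_0$. The probability in question is $\Prob_{D_{\Lambda,r}}(B) = \rho_r(B)/\rho_r(\Lambda) \le \rho_r(\Lambda\setminus\Lambda_0)/\rho_r(\Lambda_0)$. Now decompose $\Lambda$ into cosets of $\Lambda_0$ indexed by $v \in \cO_M$: the coset for parameter $v$ is $\iota(v\sqrt d) + \Lambda_0$ up to translation in the $u$-coordinate, and by the factorization of $\rho_r$ over orthogonal coordinates (the $\iota(\cdot)$ image of $\cO_M$ and of $\cO_M\sqrt d$ live in complementary subspaces after the adjusted canonical embedding), one gets $\rho_r(\Lambda) = \rho_r(\Lambda_0)\cdot\sum_{v\in\cO_M}\rho_r(\iota(v\sqrt d))$. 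Hence
\[
\Prob(B) \le \frac{\sum_{v\in\cO_M,\, v\neq 0}\rho_r(\iota(v\sqrt d))}{\sum_{v\in\cO_M}\rho_r(\iota(v\sqrt d))} \le \sum_{0\neq v\in\cO_M}\rho_r(\iota(v\sqrt d)).
\]
It remains to bound this theta-like sum over the lattice $\iota(\sqrt d\,\cO_M) \subset \bR^n$, whose covolume is $|\disc M|^{1/2}\cdot d^{n/2}$ up to the normalization absorbed in $\iota$; the point of the hypothesis $r < 2\sqrt{\pi d}$ is exactly that $r$ is small relative to the minimal vector length ($\gtrsim \sqrt d$) of this scaled lattice, so the tail sum is dominated by a geometric-type estimate.

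For the final quantitative step I would use a standard lattice Gaussian tail bound: for a lattice $\Lambda'$ with shortest vector length $\lambda_1$ and $r$ below the smoothing parameter scale, $\sum_{0\neq v\in\Lambda'}\rho_r(v)$ is bounded by (number of lattice points in a ball) $\times$ $e^{-\lambda_1^2/r^2}$-type factors, or more cleanly by comparing with the Gaussian measure and using $\sum_{0\neq v}e^{-\pi\|v\|^2/s^2} \le$ (something like) $(\text{const})^n e^{-\pi\lambda_1^2/s^2}$; converting between the $\rho_r$ normalization ($e^{-\|x\|^2/r^2}$) and the $e^{-\pi\|x\|^2/s^2}$ normalization introduces the $\pi$'s and the factor $\sqrt{4\pi e d}/r$ comes from bounding the per-coordinate sum $\sum_{k\neq 0} e^{-\pi d k^2/r^2} \le \frac{\sqrt{4\pi e d}}{r}e^{-2\pi d/r^2}$ (valid precisely when the ratio inside is $<1$, i.e. $r<2\sqrt{\pi d}$), raised to the $n$-th power since $\cO_M$ contributes $n$ "independent" integer coordinates after a suitable basis choice; taking the min with $1$ handles the regime where the bound is vacuous. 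I expect the main obstacle to be the second paragraph: rigorously justifying that the bad set is exactly the $v\neq 0$ cosets (this uses inertness of $q$ in $L$ crucially — if $q$ split or ramified in $L$ the argument would break) and that the adjusted canonical embedding really does make the $\cO_M$ and $\sqrt d\,\cO_M$ parts orthogonal so that $\rho_r$ factors; the final tail estimate is then a routine, if slightly fiddly, one-dimensional Gaussian sum bound raised to the $n$-th power.
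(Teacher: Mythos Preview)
Your structural analysis is exactly what the paper does: write $e = e_1 + e_2\sqrt{d}$ with $e_1,e_2 \in \cO_M$, observe that $\rho(e) \notin \bF_q$ forces $e_2 \neq 0$ (using that $d$ is a non-square mod $q$), and check that the $V = \cO_M$ and $W = \sqrt{d}\,\cO_M$ pieces are orthogonal under the adjusted canonical embedding. The paper's proof and yours diverge only at the final tail estimate.

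The paper does \emph{not} factor the Gaussian and bound a theta sum over $\sqrt{d}\,\cO_M$. Instead it uses orthogonality only to deduce the single scalar inequality $\|e\|^2 = \|e_1\|^2 + d\|e_2\|^2 \geq 2nd$ whenever $e_2 \neq 0$ (the degree of $K$ is $2n$ with $n = p-1$, and the shortest nonzero vector in $\iota(\cO_M)$ has squared length $2n$). Then it applies the Micciancio--Regev discrete-Gaussian tail bound (the paper's Lemma~\ref{lem: bound}: $\Prob(\|v\| > c\sqrt{N}) \leq (c r^{-1}\sqrt{2\pi e}\,e^{-\pi c^2/r^2})^{N}$) directly to the full lattice $\iota(R)\subset\bR^{2n}$ with $c = \sqrt{2d}$, which immediately yields $\beta$. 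Your route via $\sum_{0\neq v}\rho_r(\iota(v\sqrt d))$ is a valid intermediate, but your proposed endgame --- ``$\cO_M$ contributes $n$ independent integer coordinates after a suitable basis choice,'' so that the sum factors into $n$ one-dimensional sums --- is not correct: the power basis of $\bZ[\zeta_p]$ is not orthogonal under the canonical embedding, so $\rho_r$ does not factor over it. The Banaszczyk-type alternative you mention is the right fix, and once you write it out you will find it is precisely the Micciancio--Regev bound the paper invokes; in other words your detour collapses back to the paper's one-line finish.
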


\begin{example}
As a sample application of the theorem, we take $d = 4871, r = 68.17$ and $p = 43$. Then we computed $\beta = 0.11\ldots$. So if $x \in R_q$ is drawn from the error distribution, then $\rho(x) \in \bF_q$ with probability at least 0.88. 
\end{example}

\begin{Lemma} \label{field theory}Under the notation above, we have \\
(1) $K/\bQ$ is a Galois extension. \\
(2) $[K: \bQ] = [M:\bQ][L:\bQ] = 2(p-1)$. \\
(3) The prime $q$ has residue degree 2 in $K$.  \\
(4) $\cO_K = \cO_M \cdot \cO_L = \bZ[\zeta_p, \sqrt{d}]$. \\
(5) $|\disc(\cO_K)| = p^{2(p-2)}(4d)^{(p-1)}$.
\end{Lemma}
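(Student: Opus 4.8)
The plan is to derive all five parts from the structure of $K$ as a compositum of two abelian fields with coprime discriminants; the substantive points to verify are that $M$ and $L$ are linearly disjoint over $\bQ$ and that $|\disc\cO_M|$ and $|\disc\cO_L|$ share no common factor, after which (1)--(5) each follow from a standard result.

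First I would collect the ingredients for $M$ and $L$ separately. Both $M=\bQ(\zeta_p)$ and $L=\bQ(\sqrt d)$ are abelian over $\bQ$, hence Galois; we have $\cO_M=\bZ[\zeta_p]$ with $|\disc\cO_M|=p^{p-2}$, and, since $d\equiv 2,3\pmod 4$, $\cO_L=\bZ[\sqrt d]$ with $|\disc\cO_L|=4d$. To see $M\cap L=\bQ$, recall that the unique quadratic subfield of $M$ is $\bQ(\sqrt{p^*})$ with $p^*=(-1)^{(p-1)/2}p$; since $d>1$ is squarefree and coprime to $p$, it does not differ from $p^*$ by a rational square, so $\bQ(\sqrt d)\ne\bQ(\sqrt{p^*})$ and $L\not\subseteq M$. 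Linear disjointness then yields (1) (a compositum of Galois extensions is Galois) and (2) ($[K:\bQ]=[M:\bQ][L:\bQ]=2(p-1)$), and it identifies $\mathrm{Gal}(K/\bQ)$ with $\mathrm{Gal}(M/\bQ)\times\mathrm{Gal}(L/\bQ)$.

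For (3), the congruence $q\equiv 1\pmod p$ makes $q$ split completely in $M$, so $q$ has residue degree $1$ there, while $\left(\frac{d}{q}\right)=-1$ makes $q$ inert in $L$, so residue degree $2$ there; moreover $q$ is odd, $q\ne p$, and $q\nmid d$, so $q$ is unramified in $M$ and in $L$, hence in $K$. Under the identification of Galois groups, the Frobenius at a prime of $K$ over $q$ is the pair (trivial, nontrivial), which has order $2$, so $q$ has residue degree $2$ in $K$. Finally, $\gcd(|\disc\cO_M|,|\disc\cO_L|)=\gcd(p^{p-2},4d)=1$ together with $M\cap L=\bQ$ lets us invoke the standard theorem on rings of integers of composita of fields with coprime discriminants: $\cO_K=\cO_M\cdot\cO_L=\bZ[\zeta_p,\sqrt d]$, which is (4), and the companion discriminant formula gives $|\disc\cO_K|=|\disc\cO_M|^{[L:\bQ]}\,|\disc\cO_L|^{[M:\bQ]}=(p^{p-2})^2(4d)^{p-1}=p^{2(p-2)}(4d)^{p-1}$, which is (5).

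The only genuine obstacle is the disjointness-and-coprimality bookkeeping: one must make sure that the hypothesis $d\equiv 2,3\pmod 4$ is actually used (it gives both $\cO_L=\bZ[\sqrt d]$ and $|\disc\cO_L|=4d$) and likewise $\gcd(d,p)=1$ (needed both for $L\not\subseteq M$ and for the coprimality of the two discriminants). Once these are in hand, each of (1)--(5) is a one-line appeal to a standard fact about abelian and composite number fields.
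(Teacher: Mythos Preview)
Your proof is correct and follows essentially the same approach as the paper: establish $M\cap L=\bQ$, then appeal to the standard theorem on composita with coprime discriminants (the paper cites Marcus, Chapter~II, Theorem~12) for (4) and (5), with (3) coming from the splitting behaviour of $q$ in $M$ and $L$ separately. The only minor variation is in how you show $M\cap L=\bQ$: you identify the unique quadratic subfield of $M$ as $\bQ(\sqrt{p^*})\ne L$, whereas the paper argues directly that $M$ and $L$ have disjoint ramification loci (ramified only at $p$, respectively at primes dividing $2d$), which is really the same coprime-discriminant observation already needed for (4) and (5).
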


\begin{proof}
(1) follows from the fact that $K$ is a composition of Galois extensions $M$ and $L$; (2) is equivalent to $M \cap L = \bQ$, which holds because $L/\bQ$ is unramified away from primes dividing $2d$ and $M/\bQ$ is unramified away from $p$;
for (3), note that our assumptions imply that $q$ splits completely in $M$ and is inert in $L$, hence the claim. The claims (4) and (5) follow directly from \cite[II. Theorem 12]{marcus1977number}, and the fact that $\disc(\cO_M) = p^{p-2}$ and $\disc(\cO_L) = 4d$ are coprime.
\end{proof}

The following lemma is a standard upper bound on the Euclidean lengths of samples from discrete Gaussians. It can be deduced directly from \cite[Lemma 2.10]{micciancio2007worst}.
\begin{Lemma}
\label{lem: bound}
Suppose $\Lambda \subseteq \bR^n$ is a lattice. Let $D_{\Lambda,r}$ denote the discrete Gaussian over $\Lambda$ of width $r$. Suppose $c$ is a positive constant such that $c > \frac{r}{\sqrt{2\pi}}$. Let $v$ be a sample from $D_{\Lambda, r}$. Then 
\[
	\Prob(||v||_2 > c \sqrt{n}) \leq C_{c/r}^n, 
\]
where $C_s = s \sqrt{2 \pi e} \cdot e^{-\pi s^2}$.
\end{Lemma}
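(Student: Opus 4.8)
This is the discrete-Gaussian ``light tail'' bound of Banaszczyk, and the plan is to normalise the present setup to the unit-width case and then invoke (equivalently, reprove) \cite[Lemma 2.10]{micciancio2007worst}. Write $\rho_s(x)=e^{-\pi||x||^2/s^2}$ and, for a set $S$, $\rho_s(S)=\sum_{x\in S}\rho_s(x)$, so that $\Prob_{v\sim D_{\Lambda,s}}(v\in S)=\rho_s(\Lambda\cap S)/\rho_s(\Lambda)$. First I would rescale: the map $v\mapsto v/r$ carries a sample of $D_{\Lambda,r}$ to a sample of $D_{\frac1r\Lambda,1}$ and the event $||v||_2>c\sqrt n$ to the analogous event for $v/r$, so, relabelling $c/r$ as $c$, it suffices to treat $r=1$ with hypothesis $c>1/\sqrt{2\pi}$ (for such $c$, $C_c=c\sqrt{2\pi e}\,e^{-\pi c^2}<1$, so the bound is non-trivial).

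Next I would run the standard exponential-moment estimate. Fix any $\tau\in(0,1]$. For $v\in\Lambda$ with $||v||_2>c\sqrt n$, factor $\rho_1(v)=e^{-\pi(1-\tau)||v||^2}\cdot e^{-\pi\tau||v||^2}\le e^{-\pi(1-\tau)c^2n}\cdot e^{-\pi\tau||v||^2}$; summing over all such $v$ and then over all of $\Lambda$,
\[
\rho_1\big(\{\,v\in\Lambda:||v||_2>c\sqrt n\,\}\big)\;\le\;e^{-\pi(1-\tau)c^2n}\sum_{v\in\Lambda}e^{-\pi\tau||v||^2}\;=\;e^{-\pi(1-\tau)c^2n}\,\rho_{1/\sqrt\tau}(\Lambda).
\]
Poisson summation gives $\rho_s(\Lambda)=\det(\Lambda)^{-1}s^n\rho_{1/s}(\Lambda^{*})$; since $\rho_{1/s}(\Lambda^{*})\le\rho_1(\Lambda^{*})$ for $s\ge1$, comparison with the case $s=1$ yields $\rho_s(\Lambda)\le s^n\rho_1(\Lambda)$, i.e.\ $\rho_{1/\sqrt\tau}(\Lambda)\le\tau^{-n/2}\rho_1(\Lambda)$ whenever $\tau\le1$. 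Dividing through by $\rho_1(\Lambda)$,
\[
\Prob\big(||v||_2>c\sqrt n\big)\;\le\;\big(\tau^{-1/2}\,e^{-\pi(1-\tau)c^2}\big)^n\qquad\text{for every }\tau\in(0,1].
\]

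Finally I would optimise over $\tau$: the minimiser of $\tau^{-1/2}e^{\pi\tau c^2}$ on $(0,\infty)$ is $\tau=1/(2\pi c^2)$, and this lies in $(0,1]$ \emph{exactly} when $c\ge1/\sqrt{2\pi}$ — the only place the hypothesis enters. Substituting, $\tau^{-1/2}e^{-\pi(1-\tau)c^2}=(2\pi c^2)^{1/2}e^{1/2}e^{-\pi c^2}=c\sqrt{2\pi e}\,e^{-\pi c^2}=C_c$, so $\Prob(||v||_2>c\sqrt n)\le C_c^{\,n}$; undoing the relabelling (replacing $c$ by $c/r$) recovers the lemma. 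I do not expect a genuine obstacle here — it is a routine reduction to a standard estimate — the one point requiring care being the constraint $\tau\le1$ forced on us at the Poisson step, which is precisely why the hypothesis $c>r/\sqrt{2\pi}$ is imposed; it cannot be dropped in general, as the inequality already fails for $\Lambda=\bZ^n$ once $c$ falls below the threshold.
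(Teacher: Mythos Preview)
Your argument is correct; it is precisely the Banaszczyk exponential-moment proof that underlies \cite[Lemma~2.10]{micciancio2007worst}, and that citation is all the paper invokes --- there is no proof in the paper, only the one-line remark that the bound ``can be deduced directly'' from that reference. Your normalisation $\rho_s(x)=e^{-\pi\|x\|^2/s^2}$ (rather than the paper's earlier $\rho_r(x)=e^{-\|x\|^2/r^2}$) is the one consistent with both the Micciancio--Regev source and the stated constant $C_s=s\sqrt{2\pi e}\,e^{-\pi s^2}$, so it is the right choice here.
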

\begin{proof}[of Theorem]
Part (3) of Lemma~\ref{field theory} implies that
\begin{equation}
	1, \zeta_p, \ldots, \zeta_p^{p-2};   \sqrt{d}, \ldots, \zeta_p^{p-2} \sqrt{d} \tag{*}
\end{equation}
is an integral basis of $R = \cO_K$. By our assumptions, we have $R/\fq R \cong \bF_{q^2}$, the finite field of $q^2$ elements. Under the map $\rho$, the first $(p-1)$ elements of the basis reduce to $\bF_q$, and the rest reduce to the complement $\bF_{q^2} \setminus \bF_q$, because $d$ is not a square modulo $q$.

Let $n = p - 1$ be the degree of $M$ over $\bQ$. Then the extension $K/\bQ$ has degree $2n$. We denote the elements in (*) by $v_1, \ldots, v_n$ and $w_1, \ldots, w_n$. Then $||\iota(v_i)|| = \sqrt{2n}$, while $||\iota(w_i)|| = \sqrt{2nd}$.  
We compute the root volume $c := \left( vol(R) \right)^{1/n}$. It is a general fact that $vol(R) = |\disc(R)|^{\frac{1}{2}}$, so we have
\[
	c = |\disc(R)|^{\frac{1}{2n}} = \sqrt{2} p^{\frac{p-2}{2(p-1)}}  d^{\frac{1}{4}}.
\]
So when $d \gg p$, we have $||v_i|| \ll c \ll ||w_i||$. We have a decomposition $R = V \oplus W$, where $V$ and $W$ are free abelian groups with bases $v_1, \ldots, v_n$ and $w_1, \ldots, w_n$, respectively. The embeddings of $V$ and $W$ are orthogonal subspaces, because $\Tr(v_i \bar{w_j}) = 0$ for all $i, j$.  For any element $e \in R$, we can write $e = e_1 + e_2 \sqrt{d}$ where $e_1, e_2$ are elements of $\bZ[\zeta_p]$, and it follows that $||e||^2 = ||e_1||^2 + d||e_2||^2$.
In particular, if $e_2 \neq 0$, then $||e|| \geq \sqrt{2nd}$.

By applying Lemma~\ref{lem: bound} with $c = \sqrt{2d}$, the assumptions in the statement of our theorem imply that the probability that the discrete Gaussian $D_{\iota(R), r}$ will output a sample with $e_2 \neq 0$ is less than $\beta$. So the statement of theorem follows, since $e_2 = 0$ implies $\rho(e) \in \bF_q$, i.e., the image of $e$ lies in the prime subfield. 
\end{proof}

Therefore, we can specialize the general attack in this situation as follows. Given a set $S$ of samples $(a,b) \in (R/qR)^2$, we loop through all $q^2$ possible guesses $g$ of the value $s \mod{\fq}$ and compute $e_g = \rho(b) - g \rho(a)$. We then perform a chi-square test on the set $\{e_g: (a,b) \in S \}$, using two bins $\bF_q$ and $\bF_{q^2} \setminus \bF_q$. If the samples are not taken from the RLWE distribution, or if the guess is incorrect, we expect to obtain uniform distributions; for the correct guess, we have $e_g = \rho(e)$, and by the above analysis, if the error parameter $r_0$ is sufficiently small, then the chi-square test might detect non-uniformness, since the portion of elements that lie in $\bF_q$ might be larger than $1/q$.

The theoretical time complexity of our attack is $O(nq^3)$: the loop runs through $q^2$ possible guesses. In each passing of the loop, the number of samples we need for the chi-square test is $O(q)$, and the complexity of computing the map $\rho$ on one sample is $O(n)$. Note that using the techniques in Section 3 of this paper, we could reduce the 
complexity to $O(nq^2)$. 

\begin{remark}
It is easy to verify that if a triple $(p,q,d)$ satisfies our assumptions, then so does $(p,q, d+4kq)$ for any integer $k$, as long as $d+4kq$ is square free. This shows one infinite family of Galois fields vulnerable to our attack.
\end{remark}

\subsection{Examples}

Table~\ref{tab: new attacked} records some of the successful attacks we performed on the instances described previously.  In each row of Table~\ref{tab: new attacked}, the degree of the number field is $2(p-1)$. Note that the 
runtimes are computed based on the improved version of the attack described in Section 3 of this paper. Also, 
by varying the parameters $p$ and $d$, we can find vulnerable instances with $r_0 \to \infty$. For example, any $r_0 = o(d^{1/4}/\sqrt{p})$ will suffice. 

\begin{remark}
From Table~\ref{tab: new attacked}, we see that the the attack in practice seems to work better (i.e., we can attack larger width $r$) than what is predicted in Theorem~\ref{thm in sec2}. As a possible explanation, we remark that in proving the theorem we bounded the probability of $e_2 = 0$ from below. However, the condition $e_2 = 0$ is sufficient but not necessary for $\rho(e)$ to lie in $\bF_q$, so our estimation may be a very loose one. 
\end{remark}

\begin{table}[h!]
\caption{New vulnerable Galois RLWE instances}
\label{tab: new attacked}
\begin{center}
\begin{tabular}{c|c|c|c|c|c|c}
$p$ & $d$ & $q$ & $r_0$ & $r$ & no. samples & runtime (in seconds) \\ \hline
31 & 4967 & 311 & 8.94& 592.94 & 3110 & 144.92 \\
43 & 4871 & 173 & 8.97 & 694.94 & 1730 & 6.44 \\
61 & 4643 & 367 & 8.84  &815.11 & 3670 & 205.28 \\
83 & 4903 & 167  & 8.94 & 963.84 & 1670  &  5.74 \\
103 & 4951 & 619 & 8.94 & 1076.32 & 6190 & 579.77\\
109 & 4919 & 1091 & 8.94 & 1105.44 & 10910 & 1818.82 \\
151 &100447 &907 & 14.08 & 4356.02 & 9070 &1394.18 \\
181 & 100267 & 1087 & 14.11 & 4777.17 & 10870 & 1973.47 \\
\end{tabular}
\end{center}
\end{table}

\subsection{Remarks on other possible attacks}

First, we note that the instances we found in this section are not directly attackable using linear algebra, as in the recent paper \cite{castryck2016provably}. The reason is that although the last $n/2$-coordinates of the error $e$ under the basis (*) are small integers, they are nonzero most of the time, so it is not clear how one can extract exact linear equations from the samples. On the other hand, note that for linear equations with small errors, there is the attack on the search RLWE problem proposed by Arora and Ge. However, the attack requires $O(n^{d - 1})$ samples and solving a linear system in $O(n^{d})$ variables. Here $d$ is the width of the discrete error: for example, if the error can take values $0,1,2,-1,-2$, then $d = 5$. Thus the attack of Arora and Ge becomes impractical when $n$ is larger than $10^2$ and $d \geq 5$, say. In contrast, the complexity of our attack depends linearly on $n $ and quadratically on $q$. In particular, it does not depend on the error size (although the success rate does depend on the error size).

\section{Security of 2-power cyclotomic rings with unramified moduli}
\label{sec: cyclo-secure}

In this section we provide some numerical evidence that for 2-power cyclotomic rings, the image of a fairly narrow RLWE error distribution modulo an unramified prime ideal $\fq$ of residue degree one or two is practically indistinguishable from uniform, implying that the 2-power cyclotomic rings are protected against the family of attacks in this paper. 

We restrict ourselves to 2-power cyclotomic rings because the geometry is simple, namely the discrete Gaussian distribution $D_{\iota(R),\sqrt{n} r}$ over the ring is equivalent to a PLWE distribution, where each coefficient of the error is sampled independently from a discrete Gaussian $D_{\bZ,r}$ over the integers.

To further aid the analysis, we make another simplifying assumption by replacing $D_{\bZ,r}$ in the PLWE distribution described above by a  ``shifted binomial distribution''. This allows a closed form formula for a bound on the statistical distance, and hence eases the analysis. 


Let $m = 2^d$ for some integer $d \geq 1$ and let $K = \bQ(\zeta_m)$ be the $m$-th cyclotomic field, with degree $n = m/2$. Let $q$ be a prime such that $q \equiv 1\pmod{m}$. Finally, let $\fq$ be a prime ideal above $q$.

Now we introduce a class of ``shifted binomial distributions''.
\begin{definition}
For an even integer $k \geq 2$, let $\cV_{k}$ denote the distribution over $\bZ$ such that for every $t \in \bZ$,

$$\prob(\cV_{k} = t) =  \begin{cases} \frac{1}{2^k}{k \choose t+\frac{k}{2}} &\mbox{if } |t| \leq \frac{k}{2} \\
0 & \mbox{otherwise}  \end{cases}$$

\end{definition}
We will abuse notation and also use $\cV_{k}$ to denote the reduced  distribution $\cV_{k} \pmod {q}$ over $\bF_q$, and let $\nu_{k}$ denote its probability density function. Figure~\ref{fig: v8} shows a plot of $\nu_8$.
\begin{figure}[h!]
\centering
\includegraphics[width = 0.5\textwidth]{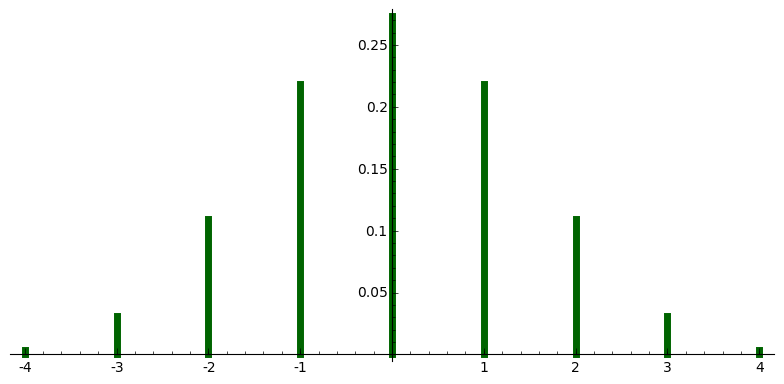}
\caption{Probability density function of $\cV_{8}$}
\label{fig: v8}
\end{figure}

\begin{definition}
\label{def: helper dist}
Let $k \geq 2$ be an even integer. Then a sample from the distribution $P_{m,k}$ is
\[
    e = \sum_{i=0}^{n-1} e_i \zeta_m^{i},
\]
where the coefficients $e_i$ are sampled independently from $\cV_k$.
\end{definition}

\subsection{Bounding the Distance from Uniform}

We recall the definition and key properties of Fourier transform over finite fields.
Suppose $f$ is a real-valued function on $\bF_q$. The {\it Fourier transform} of $f$ is defined as
\[
        \widehat{f}(y) = \sum_{a \in \bF_q} f(a) \overline{\chi_y(a)},
\]
where $\chi_y(a) := e^{2 \pi i ay/q}$.

Let $u$ denote the probability density function of the uniform distribution over $\bF_q$, that is $u(a) = \frac{1}{q}$ for all $a \in \bF_q$. Let $\delta$ denote the characteristic function of the
one-point set $\{0\} \subseteq \bF_q$. Recall that the convolution of two functions $f,g: \bF_q \to \bR$ is
defined as $(f  \ast g ) (a) = \sum_{b \in \bF_q} f(a-b)g(b)$. We list without proof some basic properties of the Fourier transform.
\begin{enumerate}
\item $\widehat{\delta} = qu$; $\widehat{u} = \delta$.
\item $\widehat{f \ast g} = \widehat{f} \cdot \widehat{g}$.
\item $f(a) = \frac{1}{q} \sum_{y \in \bF_q} \widehat{f}(y)\chi_y(a)$ (the Fourier inversion formula).
\end{enumerate}

The following is a standard result.

\begin{Lemma}
Suppose  $F$ and $G$ are independent random variables with values in $\bF_q$, having probability density functions $f$ and $g$.  Then the density function of $F+G$  is equal to $f \ast g$.
In general, suppose $F_1, \ldots, F_n$ are mutually independent random variables in $\bF_q$, with probability density functions $f_1, \ldots, f_n$. Let $f$ denote the density function of the sum $F = \sum F_i$, then $f = f_1 \ast \cdots \ast f_n$.
\end{Lemma}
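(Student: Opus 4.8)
The plan is to prove the two-variable statement by a direct computation and then deduce the general case by induction on $n$. For the two-variable case, fix $a \in \bF_q$ and partition the event $\{F + G = a\}$ according to the value of $G$: since $G$ takes values in $\bF_q$, the law of total probability gives
\[
\Prob(F + G = a) = \sum_{b \in \bF_q} \Prob(F = a - b,\ G = b).
\]
By independence of $F$ and $G$, each summand factors as $\Prob(F = a - b)\,\Prob(G = b) = f(a-b)\,g(b)$, and the resulting sum is by definition $(f \ast g)(a)$. This establishes the first assertion.

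For the general statement I would induct on $n$. The case $n = 1$ is trivial and $n = 2$ is precisely what was just shown. For the inductive step, write $F = F' + F_n$ with $F' = F_1 + \cdots + F_{n-1}$. Since $F_1, \ldots, F_n$ are mutually independent, $F'$—being a function of $F_1, \ldots, F_{n-1}$—is independent of $F_n$, so the two-variable case gives that the density of $F$ equals $f' \ast f_n$, where $f'$ is the density of $F'$. By the inductive hypothesis $f' = f_1 \ast \cdots \ast f_{n-1}$, hence $f = (f_1 \ast \cdots \ast f_{n-1}) \ast f_n$. Using that convolution on the abelian group $(\bF_q, +)$ is associative and commutative (immediate from reindexing the defining double sum), we conclude $f = f_1 \ast \cdots \ast f_n$.

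There is essentially no real obstacle here; the result is standard. The only two points worth stating explicitly are the appeal to \emph{mutual} (not merely pairwise) independence when asserting that the partial sum $F'$ is independent of $F_n$, and the routine associativity of convolution that lets the iterated convolution be written unambiguously. I would mention both briefly and not belabor the computations.
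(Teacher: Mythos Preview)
Your proof is correct and entirely standard. The paper itself does not give a proof of this lemma at all; it simply labels it ``a standard result'' and moves on, so your direct computation for $n=2$ together with the induction on $n$ is already more detailed than the paper's treatment. There is nothing to compare or correct.
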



The Fourier transform of $\nu_k$ has a nice closed-form formula, as below.
\begin{Lemma}
\label{lem: transform1}
For all even integers $k \geq 2$, $\widehat{\nu_k}(y)  = \cos \left(\frac{\pi y}{q}\right)^k.$
\end{Lemma}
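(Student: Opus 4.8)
The plan is to compute $\widehat{\nu_k}$ directly from the definition, exploiting the fact that $\cV_k$ is a sum of $k$ independent $\{\pm 1/2\}$-type shifts. First I would observe that the shifted binomial $\cV_k$ can be written as $\cV_k = X_1 + \cdots + X_k - k/2$ where each $X_j$ is an independent Bernoulli$(1/2)$ variable on $\{0,1\}$; equivalently, $\cV_k = \frac12\sum_{j=1}^k Y_j$ where the $Y_j$ are independent and uniform on $\{+1,-1\}$. This is just the standard fact that $\prob(\sum X_j = \ell) = \binom{k}{\ell}/2^k$, and translating by $-k/2$ matches Definition of $\cV_k$ exactly (noting $|t|\le k/2 \iff 0 \le t + k/2 \le k$).

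Next I would reduce modulo $q$ and apply the convolution property of the Fourier transform (property (2) above, together with the preceding Lemma on density functions of sums): since $\cV_k \bmod q$ is the sum of $k$ independent copies of the reduced single-step distribution, $\widehat{\nu_k} = (\widehat{\nu_1'})^k$ up to handling the constant shift by $-k/2$. Actually it is cleaner to note that a shift of the random variable by a constant $c$ multiplies the Fourier transform by $\overline{\chi_y(c)} = e^{-2\pi i c y/q}$, so I would compute the transform of the unshifted binomial on $\{0,\dots,k\}$, namely $\sum_{\ell=0}^k \frac{1}{2^k}\binom{k}{\ell} e^{-2\pi i \ell y/q} = \left(\frac{1 + e^{-2\pi i y/q}}{2}\right)^k$ by the binomial theorem, and then multiply by $e^{2\pi i (k/2) y/q}$ to undo the shift. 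This gives
\[
\widehat{\nu_k}(y) = e^{\pi i k y/q}\left(\frac{1 + e^{-2\pi i y/q}}{2}\right)^k = \left(\frac{e^{\pi i y/q} + e^{-\pi i y/q}}{2}\right)^k = \cos\!\left(\frac{\pi y}{q}\right)^k,
\]
which is the claimed identity.

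There is essentially no hard obstacle here; the only point requiring a little care is the bookkeeping with the modular reduction — one must check that reducing $\cV_k$ modulo $q$ before or after taking the Fourier transform gives the same answer, which holds because $\chi_y$ is a character of $\bZ/q\bZ$ and hence $\chi_y$ applied to an integer depends only on its residue, so the sum $\sum_{t\in\bZ}\prob(\cV_k=t)\overline{\chi_y(t)}$ is literally equal to $\sum_{a\in\bF_q}\nu_k(a)\overline{\chi_y(a)}$. The $k=2$ case can be checked by hand as a sanity test: $\cV_2$ puts mass $1/4, 1/2, 1/4$ on $-1,0,1$, and $\frac14 e^{2\pi i y/q} + \frac12 + \frac14 e^{-2\pi i y/q} = \frac{(1+\cos(2\pi y/q))}{2} = \cos^2(\pi y/q)$, matching the formula.
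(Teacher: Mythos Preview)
Your proof is correct and is essentially the same argument as the paper's: both reduce to the binomial expansion $\sum_{\ell=0}^{k}\binom{k}{\ell}e^{\pm 2\pi i \ell y/q}=(1+e^{\pm 2\pi i y/q})^{k}$ and then absorb the shift by $k/2$ into a factor $e^{\pm \pi i k y/q}$ to obtain $\cos(\pi y/q)^{k}$. Your framing via $\cV_k=\sum X_j-k/2$ and the convolution/shift properties is a mild repackaging of the same computation, not a different route.
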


\begin{proof}  We have
\begin{align*}
2^k \cdot \widehat{\nu_k}(y) &= \sum_{m \in \bZ/q\bZ} \left( \sum_{a \in \bZ: |aq+m| \leq k/2} {k \choose aq+m + \frac{k}{2}} \right) e^{-2 \pi i ym/q}  \\
& =\sum_{m = -\frac{k}{2}}^{\frac{k}{2}} {k \choose m+\frac{k}{2}} e^{2\pi i ym/q}  \\
&= e^{-\pi i yk/q} \sum_{m' = 0}^{k} {k \choose m'} e^{2\pi i ym'/q} \\
& =  e^{-\pi i yk/q} (1+ e^{2 \pi i y/q})^k  = (2 \cos(\pi y/q))^k.
\end{align*}
Dividing both sides by $2^k$ gives the result.
 \end{proof}


Next, we concentrate on the ``reduced distribution" $P_{m,k} \pmod {\fq}$. Note that there is a one-to-one correspondence between primitive $m$-th roots of unity in $\bF_q$ and the prime ideals above $q$ in $\bQ(\zeta_m)$. Let $\alpha$ be the root corresponding to our choice of $\fq$. Then a sample from $P_{m,k} \pmod {\fq}$ is
of the form


$$e_\alpha = \sum_{i=0}^{n-1} \alpha^i e_i \pmod {q},$$
where the coordinates $e_i$ are independently sampled from $\cV_k$. We abuse notations and use $e_\alpha$ to denote its own probability density function.

\begin{Lemma}
\label{lem: transform2}
\[
    \widehat{e_\alpha}(y) = \prod_{i=0}^{n-1} \cos \left(\frac{ \alpha^i \pi y}{q} \right)^k.
\]
\end{Lemma}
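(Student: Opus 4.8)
The plan is to derive the formula for $\widehat{e_\alpha}$ by combining the convolution theorem for Fourier transforms with the closed-form transform of $\nu_k$ from Lemma~\ref{lem: transform1}. The key observation is that $e_\alpha = \sum_{i=0}^{n-1} \alpha^i e_i \pmod q$ is a sum of $n$ independent random variables, where the $i$-th summand is $\alpha^i e_i$ with $e_i \sim \cV_k$. So the first step is to identify the probability density function of each rescaled summand $\alpha^i e_i$ in terms of $\nu_k$, and the second step is to apply the multiplicativity of the Fourier transform over convolutions (property (2), together with the preceding Lemma on densities of sums of independent variables) to multiply these transforms together.

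First I would handle the scaling. Since $\alpha$ is a unit in $\bF_q$ (it is a primitive $m$-th root of unity with $q \equiv 1 \pmod m$, hence nonzero), multiplication by $\alpha^i$ is a bijection of $\bF_q$, so the density of $\alpha^i e_i$ at a point $a$ is $\nu_k(\alpha^{-i} a)$. I would then compute its Fourier transform: substituting and changing variables $a \mapsto \alpha^i a$ in the defining sum $\sum_{a} \nu_k(\alpha^{-i}a)\, e^{-2\pi i a y/q}$ gives $\sum_{a} \nu_k(a)\, e^{-2\pi i \alpha^i a y/q} = \widehat{\nu_k}(\alpha^i y)$. By Lemma~\ref{lem: transform1}, this equals $\cos(\pi \alpha^i y/q)^k$. (One should note that $\alpha^i y$ is interpreted as an element of $\bF_q \cong \bZ/q\bZ$ lifted to an integer; the cosine is well-defined on $\bZ/q\bZ$ since $\cos(\pi(t+q)/q)^k = (-\cos(\pi t/q))^k = \cos(\pi t/q)^k$ as $k$ is even — this is the same well-definedness already implicitly used in Lemma~\ref{lem: transform1}.)

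Then I would assemble the pieces: by the Lemma on densities of independent sums, $e_\alpha$ has density equal to the convolution of the densities of the $\alpha^i e_i$, so by property (2) of the Fourier transform, $\widehat{e_\alpha}(y) = \prod_{i=0}^{n-1} \widehat{\alpha^i e_i}(y) = \prod_{i=0}^{n-1} \cos(\pi \alpha^i y/q)^k$, which is the claimed identity. I do not expect a serious obstacle here; the only point requiring a little care is the bookkeeping of the scaling-by-$\alpha^i$ step and making sure the argument $\alpha^i y$ of the cosine is treated consistently as an element of $\bZ/q\bZ$, which is exactly the mild subtlety that makes the even-$k$ hypothesis necessary. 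The proof is essentially a one-line computation once the setup is in place:

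\begin{proof}
Since $q \equiv 1 \pmod m$, the element $\alpha \in \bF_q$ is a nonzero $m$-th root of unity, so each $\alpha^i$ is a unit and multiplication by $\alpha^i$ permutes $\bF_q$. Hence the random variable $\alpha^i e_i$ has density $a \mapsto \nu_k(\alpha^{-i}a)$, and its Fourier transform is
\[
    \sum_{a \in \bF_q} \nu_k(\alpha^{-i}a)\, e^{-2\pi i a y /q} = \sum_{a \in \bF_q} \nu_k(a)\, e^{-2\pi i \alpha^i a y/q} = \widehat{\nu_k}(\alpha^i y) = \cos\!\left(\frac{\pi \alpha^i y}{q}\right)^{\! k},
\]
using Lemma~\ref{lem: transform1} in the last step (the expression is well-defined on $\bZ/q\bZ$ since $k$ is even). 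As the $e_i$ are mutually independent, so are the $\alpha^i e_i$, and $e_\alpha = \sum_{i=0}^{n-1}\alpha^i e_i$ has density equal to the convolution of the individual densities. By property (2) of the Fourier transform,
\[
    \widehat{e_\alpha}(y) = \prod_{i=0}^{n-1} \widehat{\alpha^i e_i}(y) = \prod_{i=0}^{n-1} \cos\!\left(\frac{\alpha^i \pi y}{q}\right)^{\! k}. \qedhere
\]
\end{proof}
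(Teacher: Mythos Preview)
Your proof is correct and follows essentially the same approach as the paper: the paper's proof is a one-liner (``This follows directly from Lemma~\ref{lem: transform1} and the independence of the coordinates $e_i$''), and you have simply unpacked the details of that sentence, explicitly handling the scaling by $\alpha^i$ via a change of variables and invoking the convolution/multiplicativity property of the Fourier transform. Your added remark on why $\cos(\pi\,\cdot/q)^k$ is well-defined on $\bZ/q\bZ$ for even $k$ is a nice clarification that the paper leaves implicit.
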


\begin{proof}
This follows directly from Lemma~\ref{lem: transform1} and the independence of the coordinates $e_i$.
 \end{proof}

\begin{Lemma} \label{prop: bound}
Let $f: \bF_q \to \bR$ be a function such that $\sum_{a \in \bF_q} f(a) = 1$. Then for all $a \in \bF_q$, the following holds. 
\begin{equation} \label{eq: secure}
    |f(a) -  1/q| \leq \frac{1}{q}  \sum_{y \in \bF_q, y \neq 0}  |\hat{f}(y)|.
\end{equation}
\end{Lemma}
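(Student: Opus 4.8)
The plan is to use the Fourier inversion formula stated in property (3) of the list of Fourier transform properties. Since $\sum_{a \in \bF_q} f(a) = 1$, we have $\widehat{f}(0) = \sum_{a \in \bF_q} f(a) \overline{\chi_0(a)} = \sum_{a} f(a) = 1$ because $\chi_0$ is the trivial character. The inversion formula then reads
\[
    f(a) = \frac{1}{q} \sum_{y \in \bF_q} \widehat{f}(y) \chi_y(a) = \frac{1}{q} + \frac{1}{q} \sum_{y \neq 0} \widehat{f}(y) \chi_y(a).
\]
First I would isolate the $y = 0$ term to get $f(a) - 1/q = \frac{1}{q} \sum_{y \neq 0} \widehat{f}(y) \chi_y(a)$.

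Next I would apply the triangle inequality and the fact that $|\chi_y(a)| = |e^{2\pi i a y/q}| = 1$ for every $y$ and $a$, yielding
\[
    |f(a) - 1/q| = \frac{1}{q} \left| \sum_{y \neq 0} \widehat{f}(y) \chi_y(a) \right| \leq \frac{1}{q} \sum_{y \neq 0} |\widehat{f}(y)| \, |\chi_y(a)| = \frac{1}{q} \sum_{y \neq 0} |\widehat{f}(y)|,
\]
which is exactly inequality~\eqref{eq: secure}. That completes the argument.

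This is a short and routine lemma, so there is no real obstacle; the only points requiring a word of care are (i) noting that the hypothesis $\sum_a f(a) = 1$ is precisely what forces $\widehat{f}(0) = 1$, and (ii) observing that $\chi_y(a)$ is a root of unity and hence has modulus one, which is what makes the triangle-inequality bound uniform in $a$. I would present it in two or three lines without further ado.
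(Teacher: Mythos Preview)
Your proof is correct and is essentially the same as the paper's: apply Fourier inversion, use the hypothesis to identify $\widehat{f}(0)=1$, isolate the $y=0$ term, and bound the remaining sum via the triangle inequality and $|\chi_y(a)|\le 1$. The only cosmetic difference is that the paper routes the computation through $u$ and $\widehat{u}=\delta$ before arriving at the same expression $f(a)-1/q=\tfrac{1}{q}\sum_{y\neq 0}\widehat{f}(y)\chi_y(a)$.
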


\begin{proof} For all $a \in \bF_q$,
\begin{align*}
    f(a) - 1/q &= f(a) - u(a) \\
    & = \frac{1}{q} \sum_{y \in \bF_q} (\hat{f}(y) - \widehat{u}(y) )\chi_y(a) \\
& = \frac{1}{q} \sum_{y \in \bF_q} (\hat{f}(y)  - \delta(y) )\chi_y(a) \\
& = \frac{1}{q} \sum_{y \in \bF_q, y \neq 0} \hat{f}(y)  \chi_y(a).  \qquad \mbox{(since $\hat{f}(0) = 1$)}
\end{align*}
Now the result follows from taking absolute values on both sides, and noting that $|\chi_y(a)| \leq 1$ for all $a$ and all $y$.
 \end{proof}

Taking $f = e_\alpha$ in Lemma~\ref{prop: bound}, we immediately obtain
\begin{theorem} \label{cor: stat dist}
The statistical distance between $e_\alpha$ and $u$ satisfies

\begin{equation} \label{distance}
\Delta(e_\alpha,u) \leq \frac{1}{2}  \sum_{y \in \bF_q, y \neq 0}  |\widehat{e_\alpha}(y)|.
\end{equation}
\end{theorem}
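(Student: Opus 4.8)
The plan is to derive this immediately from Lemma~\ref{prop: bound} together with the definition of statistical distance, so the ``proof'' is really just a short computation. Recall that for two probability densities on $\bF_q$ the statistical distance is $\Delta(e_\alpha, u) = \tfrac{1}{2}\sum_{a \in \bF_q} |e_\alpha(a) - u(a)| = \tfrac{1}{2}\sum_{a \in \bF_q} |e_\alpha(a) - 1/q|$. So the strategy is: bound each summand uniformly in $a$ using Lemma~\ref{prop: bound}, then sum over the $q$ points of $\bF_q$.

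First I would check the hypothesis of Lemma~\ref{prop: bound} is met, i.e.\ that $\sum_{a \in \bF_q} e_\alpha(a) = 1$; this holds because $e_\alpha$ is by construction the probability density function of the random variable $\sum_{i=0}^{n-1}\alpha^i e_i \pmod q$. Applying Lemma~\ref{prop: bound} with $f = e_\alpha$ then gives, for every $a \in \bF_q$,
\[
  |e_\alpha(a) - 1/q| \;\le\; \frac{1}{q}\sum_{y \in \bF_q,\, y\neq 0} |\widehat{e_\alpha}(y)|.
\]
Note the right-hand side does not depend on $a$.

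Next I would sum this inequality over all $a \in \bF_q$. Since $|\bF_q| = q$, the factor $1/q$ cancels, yielding $\sum_{a \in \bF_q} |e_\alpha(a) - 1/q| \le \sum_{y\neq 0} |\widehat{e_\alpha}(y)|$. Dividing by $2$ and recognizing the left side as $2\Delta(e_\alpha, u)$ gives the claimed bound~\eqref{distance}.

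There is no real obstacle here: the content has already been extracted into Lemma~\ref{prop: bound} (via the Fourier inversion formula and $\widehat{u} = \delta$), and all that remains is bookkeeping. The only point worth stating explicitly is the normalization $\widehat{e_\alpha}(0) = \sum_a e_\alpha(a) = 1$, which is what let us drop the $y = 0$ term in Lemma~\ref{prop: bound} in the first place, and which is also why the estimate is vacuous (the right side could exceed $1$) unless the nonzero Fourier coefficients $\widehat{e_\alpha}(y) = \prod_i \cos(\alpha^i \pi y/q)^k$ from Lemma~\ref{lem: transform2} are small — but that quantitative analysis is the subject of the subsequent sections, not of this theorem.
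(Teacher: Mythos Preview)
Your proof is correct and matches the paper's approach exactly: the paper simply states that the theorem follows immediately by taking $f = e_\alpha$ in Lemma~\ref{prop: bound}, and you have spelled out the one-line computation (sum the pointwise bound over $a \in \bF_q$ and divide by $2$) that makes this explicit.
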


Now let $\epsilon(m,q,k,\alpha)$ denote the right hand side of (\ref{distance}), i.e.,
\[
    \epsilon(m,q,k, \alpha) = \frac{1}{2}\sum_{y \in \bF_q, y \neq 0} \prod_{i=0}^{n-1} \cos \left(\frac{ \alpha^i \pi y}{q} \right)^k.
\]
To take into account all prime ideals above $q$, we let $\alpha$ run through all primitive $m$-th roots of unity in $\bF_q$ and define
$$\epsilon(m,q,k) := \max \{\epsilon(m,q,k,\alpha): \alpha \mbox{ has order } m \mbox{ in } (\bF_q)^*\}.$$
If $\epsilon(m,q,k)$ is negligibly small, then the distribution $P_{m,k} \pmod {\fq}$ will be computationally indistinguishable from uniform.  We will prove the following theorem.

\begin{theorem} \label{cyclo secure}
Let $q,m$ be positive integers such that $q$ is a prime, $m$ is a power of 2, $q \equiv 1\mod{m}$ and  $q < m^2$. Let $\beta = \frac{1 + \frac{\sqrt{q}}{m}}{2}$; then $0 < \beta < 1$ and
\[
	\epsilon(m,q,k)  \leq \frac{q-1}{2} \beta^{\frac{km}{4}}.
\]
\end{theorem}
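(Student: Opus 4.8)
The plan is to bound the sum
\[
\epsilon(m,q,k,\alpha) = \frac{1}{2}\sum_{y \in \bF_q, y \neq 0} \prod_{i=0}^{n-1} \left|\cos\left(\frac{\alpha^i \pi y}{q}\right)\right|^k
\]
uniformly over all primitive $m$-th roots $\alpha$. The key observation is that for a fixed nonzero $y$, as $i$ ranges over $0,\dots,n-1$, the quantities $\alpha^i y \bmod q$ range over the coset $y\langle\alpha\rangle$ of the cyclic subgroup $\langle\alpha\rangle$ of order $m$ in $\bF_q^*$, each value hit exactly once (since $\alpha$ has order $m = 2n$ and $\alpha^{n} = -1$, the $n$ values $\alpha^0,\dots,\alpha^{n-1}$ are distinct and $\cos$ is even, so $|\cos(\alpha^i\pi y/q)|$ even sees all of $\pm\alpha^i y$). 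So the product $\prod_{i=0}^{n-1}|\cos(\alpha^i \pi y/q)|^k$ depends only on the coset $y\langle -1,\alpha\rangle$, and it suffices to understand, for each such coset, how small $\prod |\cos(\cdot)|$ can be forced to be — or rather, since we want an upper bound on the sum, how large it can be.

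First I would reduce to a single factor bound: for each $y \neq 0$, I claim the product $\prod_{i=0}^{n-1}|\cos(\alpha^i\pi y/q)|$ is at most $\beta^{m/4}$ for $\beta = \frac{1}{2}(1 + \sqrt q/m)$. The mechanism: $|\cos(\pi t/q)|$ is close to $1$ only when $t$ is close to $0$ modulo $q$; specifically $|\cos(\pi t/q)|^2 = \cos^2(\pi t/q) \le 1 - c\,\|t/q\|^2$ for the distance $\|\cdot\|$ to the nearest integer (using $\cos^2\theta \le 1 - (2/\pi)^2\theta^2$ type bounds, or more cleanly $\cos^2(\pi x) \le 1 - 4\|x\|^2$ won't quite do — I'd use $|\cos(\pi x)| \le \exp(-2\|x\|^2)$ or a similar Gaussian-type bound). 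The set $\{\alpha^i y \bmod q : 0 \le i < n\} = y\langle\alpha\rangle/\{\pm1\}$ consists of $n$ residues; among these, only a few can be small (close to $0 \bmod q$), because a multiplicative coset of a subgroup of order $m$ is "spread out." Concretely, the number of elements of $y\langle\alpha\rangle$ lying in an interval of length $2T$ around $0$ is at most roughly $Tm/q + 1$ by a standard argument (a subgroup of order $m$ has at most $\lceil Tm/q\rceil$-ish elements in any short interval — this is where the hypothesis $q < m^2$ enters, ensuring $\sqrt q/m < 1$ so $\beta < 1$). I would make this precise: the product over the "large-cosine" terms is harmless (each $\le 1$), but there are at least $m/4$ indices $i$ where $\|\alpha^i y/q\|$ is bounded below by roughly $\sqrt q/(2m)$, each contributing a factor $\le \beta$ — more honestly, I'd pair up the geometric/averaging argument so that $\prod_{i=0}^{n-1}|\cos(\alpha^i\pi y/q)| \le \beta^{m/4}$ via an AM-GM or counting step.

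The cleanest route to the factor $\beta$ itself: note $|\cos(\pi t/q)| \le \beta$ precisely when $t/q$ is at distance $\ge \frac{1}{\pi}\arccos\beta$ from an integer. Rather than chase constants, I suspect the intended argument uses that for each of the $m/4$ "well-separated" coset elements the cosine is bounded by the value $\cos$ takes at the $(\sqrt q/m)$-fractional-part scale, and $\beta = \frac{1}{2}(1+\sqrt q/m)$ is exactly $\cos$ evaluated near there up to the bound $\cos x \le 1 - x^2/2 \le \dots$; I would reverse-engineer: $\beta^{m/4}$ should come from $m/4$ factors each $\le \beta$. Then summing over $y$: there are $q-1$ nonzero $y$, each product is $\le \beta^{m/4}$... wait, that gives $\frac{q-1}{2}\beta^{m/4}$, but the claim has exponent $km/4$. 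Of course — I dropped the $k$-th power. Each $|\cos|^k$, so $\prod|\cos(\alpha^i\pi y/q)|^k = \left(\prod|\cos(\alpha^i\pi y/q)|\right)^k \le \beta^{km/4}$. Then $\epsilon(m,q,k,\alpha) \le \frac{1}{2}(q-1)\beta^{km/4}$, and taking the max over $\alpha$ changes nothing since the bound was uniform. So the final assembly is: $\epsilon(m,q,k) = \max_\alpha \epsilon(m,q,k,\alpha) \le \frac{q-1}{2}\beta^{km/4}$, and $0 < \beta < 1$ follows from $0 < \sqrt q < m$, i.e. $q < m^2$.

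The main obstacle I anticipate is the core geometric estimate: proving $\prod_{i=0}^{n-1}|\cos(\alpha^i\pi y/q)| \le \beta^{m/4}$ for every nonzero $y$ and every order-$m$ element $\alpha$. This requires controlling how the multiplicative coset $y\langle\alpha\rangle$ distributes among residues mod $q$ — specifically showing that not too many of its elements can cluster near $0$ (equivalently near $q$), so that a definite fraction (at least $m/4$ of the $n = m/2$ values) are forced to have $|\cos(\pi\alpha^i y/q)| \le \beta$. The subgroup $\langle\alpha\rangle$ has order $m$ and $\alpha^{m/2} = -1$, so its elements come in pairs $\pm z$; in any interval $(-L, L) \bmod q$ with $L \approx \beta$-related threshold, a cyclic subgroup of order $m$ has at most $O(Lm/q) + O(1)$ representatives (this is elementary but needs care — one can use that consecutive powers $\alpha^i, \alpha^{i+1}$ differ multiplicatively by $\alpha$, which is less direct; better to just bound the count of subgroup elements in a short interval by noting the subgroup, being a coset structure, has well-spaced elements, or invoke a pigeonhole on $\lfloor q/L\rfloor$ intervals). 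Pinning down the exact constant so that the threshold matches $\beta = \frac{1}{2}(1+\sqrt q/m)$ and the count matches $m/4$ is the delicate part; everything else (Fourier setup already done in Lemma~\ref{lem: transform2} and Theorem~\ref{cor: stat dist}, the trivial bound on the remaining factors, summing over $y$, the max over $\alpha$) is routine.
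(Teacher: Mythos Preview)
Your overall structure is right (bound the single product $\prod_{i=0}^{n-1}|\cos(\alpha^i\pi y/q)|$ by $\beta^{m/4}$ uniformly in $y\neq 0$ and $\alpha$, then sum), but the heart of your argument --- the counting claim that at least $m/4$ of the $n$ values $\alpha^i y \bmod q$ have $|\cos|\le\beta$ --- is never proved, and as you yourself note, pinning down this exact constant via an interval-counting argument is precisely the hard part. In fact it looks unworkable as stated: when $\sqrt q/m$ is small, $\beta\approx 1/2$, and the set $\{t:|\cos(\pi t/q)|>1/2\}$ occupies roughly two thirds of the residues, so an equidistribution heuristic would predict \emph{more} than half of the coset elements lying in the ``bad'' region, not at most $m/4$. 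A crude counting of elements in short intervals will not recover this specific $\beta$.

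The paper does something quite different. It writes $2^n\prod_{i=0}^{n-1}|\cos(\alpha^i\pi y/q)|=|L_\Omega(n)|$ for $L_\Omega(m)=\prod_{j=1}^{m}(1-e^{2\pi i\omega_j})$ with $\omega_j=\alpha^{j-1}y/q+1/2$, observes $|L_\Omega(m)|=|L_\Omega(n)|^2$ since $\alpha^{n}\equiv -1$, and then applies an AM--GM inequality of Shparlinski,
\[
|L_\Omega(m)|\le 2^{m/2}\bigl(1+|S_\Omega(m)|/m\bigr)^{m/2},\qquad S_\Omega(m)=\sum_{j=1}^{m}e^{2\pi i\omega_j}.
\]
Here $S_\Omega(m)=-\sum_{j=0}^{m-1}e^{2\pi i\alpha^{j}y/q}$ is a Gauss sum over the order-$m$ subgroup, hence $|S_\Omega(m)|\le\sqrt q$. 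Combining gives $2^{m}\prod|\cos|^{2}\le 2^{m/2}(1+\sqrt q/m)^{m/2}$, i.e.\ $\prod|\cos|\le\beta^{m/4}$ with exactly $\beta=(1+\sqrt q/m)/2$. So the constant $\beta$ is forced by the pair (AM--GM, Gauss sum bound), not by any interval count; your proposal is missing both of these ingredients.
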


In particular, if $\beta^{k/4} < \frac{1}{2}$, then the theorem says that $\epsilon(m,q,k)  = O(q2^{-m})$ as $m \to \infty$.

\begin{corollary}
        The statistical distance between $P_{m,k}$ modulo $\mathfrak{q}$ and a uniform distribution is bounded above, independently of the choice of $\mathfrak{q}$ above $q$, by
\begin{equation*}
        \frac{q-1}{2} \left(  \frac{1 + \frac{\sqrt{q}}{m}}{2}\right)^{\frac{km}{4}}.
\end{equation*}
\end{corollary}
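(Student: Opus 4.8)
The corollary is a one-line consequence of the two results just stated: by Theorem~\ref{cor: stat dist} the statistical distance between $e_\alpha = P_{m,k} \bmod \fq$ and uniform is at most $\epsilon(m,q,k,\alpha)$, and by definition $\epsilon(m,q,k,\alpha) \le \epsilon(m,q,k)$, which Theorem~\ref{cyclo secure} bounds by $\frac{q-1}{2}\beta^{km/4}$ with $\beta = \frac12(1+\sqrt q/m)$ --- a quantity not depending on the choice of $\alpha$, equivalently of $\fq$. So all the content sits in Theorem~\ref{cyclo secure}, and I describe how I would prove that.

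Since $k$ is even, $\cos(\alpha^i\pi y/q)^k = |\cos(\alpha^i\pi y/q)|^k \ge 0$, so $\epsilon(m,q,k,\alpha) = \frac12\sum_{y\ne 0}\prod_{i=0}^{n-1}|\cos(\alpha^i\pi y/q)|^k$, and as there are $q-1$ terms it suffices to bound the inner product by $\beta^{km/4}$ for each fixed $y\in\bF_q^\ast$, uniformly in $\alpha$. First I would rewrite this product multiplicatively. Let $H$ be the subgroup of order $m$ of $\bF_q^\ast$, so $\{\alpha^i : 0\le i < m\} = H$ and, since $m$ is a $2$-power, $\alpha^{m/2} = -1$; hence $\{\alpha^i : 0 \le i < n\}$ together with its negatives partitions $H$. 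Using $|\cos(\pi(q-z)/q)| = |\cos(\pi z/q)|$ and $n = m/2$,
\[
\prod_{i=0}^{n-1}\Bigl|\cos\tfrac{\alpha^i\pi y}{q}\Bigr|^{2} \;=\; \prod_{i=0}^{m-1}\Bigl|\cos\tfrac{\alpha^i\pi y}{q}\Bigr| \;=\; \prod_{z\in yH}\Bigl|\cos\tfrac{\pi z}{q}\Bigr|,
\]
so it is enough to show $\prod_{z\in yH}|\cos(\pi z/q)|^{2}\le\beta^{m}$. By the half-angle identity $|\cos\theta|^2 = \frac12(1+\cos 2\theta)$, and since every factor $1+\cos(2\pi z/q)$ is nonnegative, AM--GM reduces this to the linear estimate
\[
\frac1m\sum_{z\in yH}\cos\tfrac{2\pi z}{q}\ \le\ \frac{\sqrt q}{m}.
\]

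This last inequality is the crux and the only step that is not bookkeeping. Writing $\zeta_q = e^{2\pi i/q}$, the left side equals $\frac1m\,\Re\sum_{h\in H}\zeta_q^{\,yh}$, so I need $\Re\sum_{h\in H}\zeta_q^{\,yh}\le\sqrt q$, and I would get this from Gauss sums: expand the indicator $\mathbf{1}_H$ into the $\ell := (q-1)/m$ multiplicative characters of $\bF_q^\ast$ trivial on $H$, obtaining $\sum_{h\in H}\zeta_q^{yh} = \frac1\ell\sum_{\chi}\bar\chi(y)\,g(\chi)$ with $g(\chi) = \sum_{x\in\bF_q^\ast}\chi(x)\zeta_q^{x}$. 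The trivial character contributes $g(\chi_0) = -1$ and each of the $\ell-1$ nontrivial ones satisfies $|g(\chi)| = \sqrt q$, so $\bigl|\sum_{h\in H}\zeta_q^{yh}\bigr|\le\frac{1+(\ell-1)\sqrt q}{\ell} = \sqrt q - \frac{\sqrt q-1}{\ell} \le \sqrt q$ (using $q\ge 3$), and taking real parts finishes the estimate. Unwinding the chain --- $\prod_{z\in yH}|\cos(\pi z/q)|\le\beta^{m/2}$, then raise to the $k/2$ --- gives $\prod_{i=0}^{n-1}|\cos(\alpha^i\pi y/q)|^k \le \beta^{km/4}$, and summing over the $q-1$ nonzero $y$ yields $\epsilon(m,q,k,\alpha)\le\frac{q-1}{2}\beta^{km/4}$; taking the maximum over $\alpha$ gives Theorem~\ref{cyclo secure}. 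The hypothesis $q < m^2$ enters only at the very end, to guarantee $0 < \beta < 1$ so the bound actually decays.

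The main obstacle is precisely the Gauss-sum estimate: a plain triangle inequality on $\sum_{h\in H}\zeta_q^{yh}$ only gives $m$, which is worthless in the regime $q < m^2$ where $m > \sqrt q$, so one must exploit that $H$ is a multiplicative subgroup. Everything else --- the coset rewriting, the half-angle identity, and AM--GM --- is clerical, with the usual care needed about fixing residue representatives in $\{1,\dots,q-1\}$, tracking the harmless signs when replacing $z$ by $q-z$, and checking that the AM--GM base $1 + \frac1m\sum\cos$ is nonnegative before it is raised to the $m$-th power.
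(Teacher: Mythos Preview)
Your proposal is correct and follows essentially the same route as the paper. The paper packages the AM--GM step as a citation of Shparlinski's inequality $|L_\Omega(m)|\le 2^{m/2}(1+|S_\Omega(m)|/m)^{m/2}$ and then invokes ``standard bounds on Gauss sums'' for $|S_\Omega(m)|\le\sqrt q$; you instead unpack both ingredients inline, applying AM--GM directly to the factors $\tfrac12(1+\cos(2\pi z/q))$ and giving the explicit character decomposition $\sum_{h\in H}\zeta_q^{yh}=\ell^{-1}\sum_\chi\bar\chi(y)g(\chi)$ to justify the $\sqrt q$ bound on the subgroup exponential sum. The structure---rewrite the $n$-fold cosine product as a product over the order-$m$ coset $yH$, linearize via the half-angle identity, bound the resulting character sum, then sum over $y$---is identical. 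Your version has the modest advantage of being self-contained and of making transparent that the ``standard Gauss sum bound'' here is really the bound on a sum over a multiplicative subgroup rather than a full Gauss sum.
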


To prepare proving the theorem, we set up some notations of Shparlinski in \cite{shparlinski1995some}. Let $\Omega = (\omega_j)_{j=1}^{\infty}$
be a sequence of real numbers and let $m$ be a positive integer. We define the following quantities:
\begin{itemize}
\item $L_\Omega(m) =  \prod_{j=1}^{m} ( 1 - \exp(2\pi i \omega_j))$ \\
\item $S_\Omega(m) =  \sum_{j=1}^{m}  \exp(2\pi i \omega_j)$.
\end{itemize}

The following lemma is a special case of \cite[Theorem 2.4]{shparlinski1995some}.
\begin{Lemma} \label{igor}
$$|L_\Omega(m)| \leq 2^{m/2} (1 + |S_\Omega(m)|/m)^{m/2}.$$
\end{Lemma}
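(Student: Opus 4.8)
The plan is to reduce the inequality to an application of AM--GM after squaring. Write $z_j = \exp(2\pi i \omega_j)$, so that $|z_j| = 1$ for every $j$, and abbreviate $L = L_\Omega(m) = \prod_{j=1}^m (1 - z_j)$ and $S = S_\Omega(m) = \sum_{j=1}^m z_j$. First I would compute, using $|z_j|^2 = 1$,
\[
    |1 - z_j|^2 = (1 - z_j)(1 - \overline{z_j}) = 2 - 2\Re(z_j),
\]
so that
\[
    |L|^2 = \prod_{j=1}^m |1 - z_j|^2 = 2^m \prod_{j=1}^m \bigl(1 - \Re(z_j)\bigr).
\]

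Next I would note that each factor $1 - \Re(z_j)$ is nonnegative, since $\Re(z_j) \le |z_j| = 1$; this is precisely what is needed in order to apply the AM--GM inequality to these $m$ nonnegative reals. Doing so gives
\[
    \prod_{j=1}^m \bigl(1 - \Re(z_j)\bigr) \le \left( \frac{1}{m} \sum_{j=1}^m \bigl(1 - \Re(z_j)\bigr) \right)^m = \left( 1 - \frac{\Re(S)}{m} \right)^m.
\]
Since $\Re(S) \ge -|S|$, we have $1 - \Re(S)/m \le 1 + |S|/m$, and this quantity is itself nonnegative, so raising to the $m$-th power preserves the inequality. Combining the displays yields $|L|^2 \le 2^m \bigl(1 + |S|/m\bigr)^m$, and taking (nonnegative) square roots gives the claimed bound $|L_\Omega(m)| \le 2^{m/2}\bigl(1 + |S_\Omega(m)|/m\bigr)^{m/2}$.

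There is no substantial obstacle here: the entire argument is elementary. The only point deserving a moment's care is the nonnegativity of the factors $1 - \Re(z_j)$, which simultaneously legitimizes the use of AM--GM and guarantees that $1 - \Re(S)/m \ge 0$, so that the final exponentiation and square root steps go through cleanly. One could instead simply cite \cite[Theorem 2.4]{shparlinski1995some}, from which this is a special case, but the self-contained derivation above is short enough to spell out in full.
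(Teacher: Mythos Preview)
Your argument is correct. The paper itself does not prove this lemma at all; it merely records it as a special case of \cite[Theorem~2.4]{shparlinski1995some} and moves on. Your route---squaring, writing $|1-z_j|^2 = 2(1-\Re z_j)$, applying AM--GM to the nonnegative factors, and then bounding $-\Re(S)$ by $|S|$---is exactly the elementary derivation one would expect underlies Shparlinski's result in this special case, and you have handled the nonnegativity checks (needed both for AM--GM and for the final exponentiation) cleanly. Since the paper offers nothing to compare against beyond the citation, your self-contained version is strictly more informative; the only cost is a few lines of space.
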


\begin{proof}[of Theorem~\ref{cyclo secure}] We specialize the above discussion to our situation, where $m$ is a power of 2 and $n = m/2$.  We fix $\omega_k = \frac{\alpha^{k-1} y}{q} + 1/2$,
where we abuse notations and let $\alpha$ denote a lift of $\alpha \in \bF_q$ to $\bZ$.

\begin{Lemma} \label{myself}
We have 
\[
	|L_\Omega(n)|= 2^n \left| \prod_{j=0}^{n-1} \cos \left(\frac{ \alpha^j \pi y}{q} \right) \right|
\]
and $|L_\Omega(m)|= |L_\Omega(n)|^2$. 
\end{Lemma}

\begin{proof}
 We have $L_\Omega(n) = \prod_{j=1}^n ( 1 - e^{2\pi i ( \alpha^{j-1} y /q + 1/2)})  = \prod_{j=0}^{n-1} ( 1 + e^{2\pi i \alpha^{j} y/q})$. So
 \(
  |L_\Omega(n)| = \prod_{j=0}^{n-1} \big|e^{- \pi i \alpha^j y/q}+ e^{\pi i \alpha^jy/q)}\big|
  = \prod_{j=0}^{n-1} 2\,\big|\Re(e^{\pi i \alpha^j y /q})\big|
 \), which is equal to
 \( 2^n \big| \prod_{j=0}^{n-1} \cos(\alpha^j \pi y/q)\big|
 \).
 Similarly, $|L_\Omega(m)| = 2^m | \prod_{j=0}^{m-1} \cos(\alpha^j \pi y/q)|$. Since $\alpha^n \equiv -1 \mod{q}$ we have
 $\cos(\alpha^{j+n} \pi y/q) = \cos( \alpha^j \pi y/q)$ for $0 \leq j \leq n-1$. The claim now follows.
\end{proof}

On the other hand, we have $S_\Omega(m) =   -  \sum_{j=0}^{m-1} \exp \left(\frac{ 2\pi  i \alpha^j y}{q} \right)$, and standard bound on Gauss sums says that $|S_\Omega(m)| \leq q^{1/2}$.  Now combining Lemma~\ref{igor} and Lemma~\ref{myself}, we get
$$\left| \prod_{i=0}^{n-1} \cos \left(\frac{ \alpha^i \pi y}{q} \right) \right|  \leq \beta^{n/2}$$
for $\beta$ as defined in the statement of the theorem and for any nonzero $y \in \bF_q$. 
Our result in the theorem now follows from taking both sides to $k$-th power and summing over $y$.
\end{proof}

\subsection{Numerical Distance from Uniform}

We have computed $\epsilon(m,q,k)$ for various choices of parameters.  Smaller values of $\epsilon$ imply that the error distribution looks more uniform when transferred to $R/\mathfrak{q}$, rendering the instance of RLWE invulnerable to the attacks in \cite{cryptoeprint:2015:971}.

The data in Table~\ref{tab: deg1} shows that when $n \geq 100$ and the size of the modulus $q$ is polynomial in $n$, the statistical distances between $P_{m,k} \pmod{\fq}$ and the uniform distribution are negligibly small. Also, note that we fixed $k= 2$, and the epsilon values becomes even smaller when $k$ increases.

For each instance in Table~\ref{tab: deg1}, we also  generated the actual RLWE samples (where we fixed $r_0 = \sqrt{2\pi}$) and ran the chi-square attack of  \cite{cryptoeprint:2015:971} using confidence level $\alpha = 0.99$. The column labeled ``$\chi^2$'' contains the $\chi^2$ values we obtained, and the column labeled ``uniform?'' indicates whether the reduced errors are uniform. We can see from the data how the practical situation agrees with our analysis on the approximated distributions.

\begin{table}[h!]
\caption{Values of $\epsilon(m,q,2)$ and the $\chi^2$ values}
\label{tab: deg1}
\begin{center}
\begin{tabular}{c|c|c|c|c}
$m$ ($n =m/2$) & $q$ & $-[\log_2(\epsilon(m,q, 2))]$ & $\chi^2$ & uniform? \\
\hline
64  & 193 & 40 & 167.6  & yes \\
128 & 1153 & 97 & 1125.6 & yes\\

256 & 3329 & $194$ &  3350.0 & yes \\
512  &10753 & 431 & 10732.8 & yes\\
\end{tabular}
\end{center}
\end{table}

It is possible to generalize our discussion in this section to primes of arbitrary residue degree $f$, in which case the Fourier analysis will be performed over the field $\bF_{q^f}$. The only change in the definitions would be $\chi_y(a) = e^{ \frac{2 \pi i \Tr(a y)}{q}}$. Here $\Tr: \bF_{q^f} \to \bF_q$ is the trace function. Similarly, we have
\[
    \widehat{e'_\alpha}(y) = \prod_{i=1}^{n} \cos \left(\frac{ \pi \Tr(\alpha^i y) }{q} \right)^k.
\]
Table~\ref{tab: deg2} contains some data for primes of degree two.

\begin{table}[h]
\caption{Values of $\epsilon(m,q,2)$ for primes of degree two}
\begin{center}
\begin{tabular}{c|c|c} \label{tab: deg2}
$m$ ($n = m/2$) & $q$ & $-[\log_2(\epsilon(m,q,2))]$ \\
\hline
64 &383 & 31 \\
128 & 1151 & 54 \\
256 & 1279 & 159 \\
512  & 5583 & 341
\end{tabular}
\end{center}
\end{table}

\bibliographystyle{splncs03}
\bibliography{sac-rlwe}

\begin{thebibliography}{10}
\providecommand{\url}[1]{\texttt{#1}}
\providecommand{\urlprefix}{URL }

\bibitem{ajtai1997public}
Ajtai, M., Dwork, C.: A public-key cryptosystem with worst-case/average-case
  equivalence. In: Proceedings of the twenty-ninth annual ACM symposium on
  Theory of computing. pp. 284--293. ACM (1997)

\bibitem{bos2013improved}
Bos, J.W., Lauter, K., Loftus, J., Naehrig, M.: Improved security for a
  ring-based fully homomorphic encryption scheme. In: Cryptography and Coding,
  pp. 45--64. Springer (2013)

\bibitem{brakerski2012fully}
Brakerski, Z.: Fully homomorphic encryption without modulus switching from
  classical {G}ap{SVP}. In: Advances in Cryptology--CRYPTO 2012, pp. 868--886.
  Springer (2012)

\bibitem{brakerski2012leveled}
Brakerski, Z., Gentry, C., Vaikuntanathan, V.: ({L}eveled) fully homomorphic
  encryption without bootstrapping. In: Proceedings of the 3rd Innovations in
  Theoretical Computer Science Conference. pp. 309--325. ACM (2012)

\bibitem{brakerski2011fully}
Brakerski, Z., Vaikuntanathan, V.: Fully homomorphic encryption from
  {R}ing-{LWE} and security for key dependent messages. In: Advances in
  Cryptology--CRYPTO 2011, pp. 505--524. Springer (2011)

\bibitem{brakerski2014efficient}
Brakerski, Z., Vaikuntanathan, V.: Efficient fully homomorphic encryption from
  (standard) {LWE}. SIAM Journal on Computing  43(2),  831--871 (2014)

\bibitem{cryptoeprint:2016:240}
Castryck, W., Iliashenko, I., Vercauteren, F.: On error distributions in
  {R}ing-based {LWE}. Cryptology ePrint Archive, Report 2016/240 (2016),
  \url{http://eprint.iacr.org/2016/240}

\bibitem{castryck2016provably}
Castryck, W., Iliashenko, I., Vercauteren, F.: Provably weak instances of
  {R}ing-{LWE} revisited. In: Annual International Conference on the Theory and
  Applications of Cryptographic Techniques. pp. 147--167. Springer (2016)

\bibitem{cryptoeprint:2015:971}
Chen, H., Lauter, K., Stange, K.E.: {A}ttacks on search-{RLWE}. Cryptology
  ePrint Archive, Report 2015/971 (2015), \url{http://eprint.iacr.org/}

\bibitem{eisentrager2014weak}
Eisentr{\"a}ger, K., Hallgren, S., Lauter, K.: Weak instances of {PLWE}. In:
  Selected Areas in Cryptography--SAC 2014, pp. 183--194. Springer (2014)

\bibitem{elos2015weak}
Elias, Y., Lauter, K., Ozman, E., Stange, K.: Provably weak instances of
  {R}ing-{LWE}. In: Advances in Cryptology -- CRYPTO 2015, Lecture Notes in
  Comput. Sci., vol. 9215, pp. 63--92. Springer, Heidelberg (2015)

\bibitem{hoffstein2008introduction}
Hoffstein, J., Pipher, J., Silverman, J.H.: An introduction to mathematical
  cryptography, vol.~1. Springer (2008)

\bibitem{lopez2012fly}
L{\'o}pez-Alt, A., Tromer, E., Vaikuntanathan, V.: On-the-fly multiparty
  computation on the cloud via multikey fully homomorphic encryption. In:
  Proceedings of the forty-fourth annual ACM symposium on Theory of computing.
  pp. 1219--1234. ACM (2012)

\bibitem{lyubashevsky2013ideal}
Lyubashevsky, V., Peikert, C., Regev, O.: On ideal lattices and learning with
  errors over rings. Journal of the ACM (JACM)  60(6), ~43 (2013)

\bibitem{marcus1977number}
Marcus, D.A.: Number fields, vol.~18. Springer (1977)

\bibitem{micciancio2007worst}
Micciancio, D., Regev, O.: Worst-case to average-case reductions based on
  {G}aussian measures. SIAM Journal on Computing  37(1),  267--302 (2007)

\bibitem{regev2009lattices}
Regev, O.: On lattices, learning with errors, random linear codes, and
  cryptography. Journal of the ACM (JACM)  56(6), ~34 (2009)

\bibitem{shparlinski1995some}
Shparlinski, I.E.: On some characteristics of uniformity of distribution and
  their applications. In: Computational Algebra and Number Theory, pp.
  227--241. Springer (1995)

\bibitem{stehle2011making}
Stehl{\'e}, D., Steinfeld, R.: Making {NTRU} as secure as worst-case problems
  over ideal lattices. In: Advances in Cryptology--EUROCRYPT 2011, pp. 27--47.
  Springer (2011)

\end{thebibliography}

\end{document}